\newcommand{\astunder}{\underline{*}}
\newcommand{\astover}{\overline{*}}
\tikzset{
	photon/.style={decorate, decoration={snake}, draw=red},
	electron/.style={draw=blue, postaction={decorate},
		decoration={markings,mark=at position .55 with {\arrow[draw=blue]{>}}}},
	gluon/.style={decorate, draw=magenta,
		decoration={coil,amplitude=4pt, segment length=5pt}},
	sderiv/.style={postaction={decorate},
		decoration={markings,mark=at position .3 with {\arrow{>}}}},
	tderiv/.style={postaction={decorate},
		decoration={markings,mark=at position .7 with {\arrow{<}}}},
	stderiv/.style={postaction={decorate},
		decoration={markings,mark=at position .7 with {\arrow{<}},mark=at position .3 with {\arrow{>}}}}
}
\definecolor{see}{RGB}{67,75,179}
\definecolor{darksee}{RGB}{42,44,148}
\definecolor{honey}{RGB}{232,180,129}
\definecolor{lighthoney}{RGB}{255,254,220}
\definecolor{citecol}{rgb}{0.5,0,0} 
\definecolor{blue1}{RGB}{130,150,209}
\definecolor{blue1}{RGB}{130,150,209}
\definecolor{blue2}{RGB}{42,60,122}
\definecolor{see}{RGB}{67,75,179}
\newlength{\bibitemsep}\setlength{\bibitemsep}{.2\baselineskip plus .05\baselineskip minus .05\baselineskip}
\newlength{\bibparskip}\setlength{\bibparskip}{0pt}
\let\oldthebibliography\thebibliography
\renewcommand\thebibliography[1]{%
	\oldthebibliography{#1}%
	\setlength{\parskip}{\bibitemsep}%
	\setlength{\itemsep}{\bibparskip}%
}
\newcommand{\no}[1]{\mathop{\mathopen: {#1} \mathclose:}}
\newcommand{\CE}{\mathcal{CE}}  
\newcommand{\E}{\mathcal{E}}
\newcommand{\V}{\mathcal{V}}
\newcommand{\fA}{\mathfrak{A}}
\newcommand{\F}{\mathcal{F}}
\newcommand{\BV}{\mathcal{BV}}
\newcommand{\frakg}{\mathfrak{g}}
\newcommand{\frakk}{\mathfrak{k}}
\newcommand{\euL}{\mathscr{L}}
\newcommand{\euV}{\mathscr{V}}  
\newcommand{\Gcal}{\mathcal{G}}  
\newcommand{\BVcal}{\mathcal {BV}}
\newcommand{\CEcal}{\mathcal {CE}}
\newcommand{\Dcal}{\mathcal{D}}
\newcommand{\Ecal}{\mathcal{E}} 
\newcommand{\Fcal}{\mathcal{F}} 
\newcommand{\Acal}{\mathcal{A}} 
\newcommand{\Ncal}{\mathcal{N}}
\newcommand{\Mcal}{\mathcal{M}}
\newcommand{\Ocal}{\mathcal{O}}
\newcommand{\Scal}{\mathcal{S}}
\newcommand{\Pcal}{\mathcal{P}}
\newcommand{\Tcal}{\mathcal{T}}
\newcommand{\Vcal}{\mathcal{V}}
\newcommand{\Ci}{\mathcal{C}^\infty} 
\newcommand{\WF}{\mathrm{WF}}         
\newcommand{\id}{\mathrm{id}}               
\newcommand{\Diff}{\mathrm{Diff}}        
\DeclareMathOperator{\im}{\mathrm{Im}}             
\DeclareMathOperator{\tr}{\mathrm{tr}}                 
\newcommand{\loc}{\mathrm{loc}}
\newcommand{\inv}{\mathrm{inv}}
\newcommand{\reg}{\mathrm{reg}}
\newcommand{\ren}{\mathrm{r}}
\newcommand{\pg}{\mathrm{pg}}
\newcommand{\af}{\mathrm{af}}
\newcommand{\ta}{\mathrm{ta}}
\newcommand{\gh}{\mathrm{gh}}
\newcommand{\mc}{{\mu\mathrm{c}}}
\newcommand{\ml}{\mathrm{ml}}
\newcommand{\ex}{\mathrm{ext}}
\newcommand{\YM}{\rm\sst YM}
\newcommand{\NN}{\mathbb{N}}          
\newcommand{\RR}{\mathbb{R}}           
\newcommand{\CC}{\mathbb{C}}           
\newcommand{\al}{\alpha}
\newcommand{\De}{\Delta}
\newcommand{\la}{\lambda}
\newcommand{\ph}{\varphi}
\newcommand{\T}{\cdot_{{}^\Tcal}}
\newcommand{\TR}{\cdot_{{}^{\TTR}}}
\newcommand{\TT}{\Tcal}
\newcommand{\TTR}{{\Tcal_\ren}}
\newcommand{\Poi}[2]{\left\lfloor#1,#2\right\rfloor}
\newcommand{\qme}{{\textsc{qme}}}
\newcommand{\cme}{{\textsc{cme}}}
\newcommand{\mwi}{{\textsc{mwi}}}
\newcommand{\sst}[1]{\scriptscriptstyle{#1}}  
\newcommand{\minus}{\sst{-1}}   
\newcommand{\1}{\mathds{1}}                         
\newcommand{\pa}{\partial}                              
\newcommand{\be}{\begin{equation}}
\newcommand{\ee}{\end{equation}}
\newcommand{\Lap}{\bigtriangleup}
\newcommand{\form}{\stackrel{\mathrm{formal}}{=}}
\DeclareMathOperator{\supp}{supp}      
\newcommand{\Pei}[2]{\lfloor #1, #2 \rfloor}
\newcommand{\skal}[2]{\left< #1 , #2 \right>}
\DeclareMathOperator{\Aut}{Aut}
\newcommand{\dgr}{{\sst\ddagger}}
\newcommand{\Jac}{\mathrm{Jac}}
\theoremstyle{plain}
\newtheorem{thm}{Theorem}[section]
\newtheorem{prop}[thm]{Proposition}
\newtheorem{cor}[thm]{Corollary}
\newtheorem{lemma}[thm]{Lemma}
\theoremstyle{definition}
\newtheorem{df}[thm]{Definition}
\theoremstyle{remark}
\newtheorem{rem}[thm]{Remark}
\newtheorem{exa}[thm]{Example}
	\title{BV quantization in perturbative  algebraic QFT: 
		\\{\Large \textit{Fundamental concepts and perspectives}}}
	\date{\today}
	\author{Kasia Rejzner}
\begin{document}	
	\maketitle	
	\begin{abstract}
		This paper is mainly based on the talk I presented at the meeting \textit{The Philosophy and Physics of Noether's Theorems} that took place 5-6 October 2018, but it also contains some original results that were inspired by discussions with mathematicians, physicists and philosophers about the problem of understanding the intrinsic meaning of gauge invariance. In this work, I argue that following the principles of \textit{locality}, \textit{deformation} and \textit{homology}, one naturally ends up using the Batalin-Vilkovisky (BV) formalism in quantizing gauge theories. 
		
		I start with the gentle introduction into the BV framework and then I proceed to some new results and more speculative deliberations. In the classical theory, I present a new perspective on the classical BV operator, using the notion of M{\o}ller maps. In the quantum theory, I present some loose ideas on the formulation of anomalous master Ward identity in the framework proposed recently by Buchholz and Fredenhagen, based on local S-matrices.
	\end{abstract}

\section{Introduction}
The rigorous formulation of gauge theories is one of the big open problems in modern mathematical physics. Its importance for mathematics is evidenced by the fact that it is one of the famous Millennium Problems formulated by the Clay Mathematics Institute. Understanding the proper meaning of concepts such as \textit{symmetry} and \textit{duality} in the context of gauge theories (both classical and quantum) is also an important issue from the point of view of philosophy of science. Gauge theories are therefore a perfect ground for physicists, mathematicians and philosophers to work together and develop new concepts to tackle the fundamental questions.

I think one can say that the first fundamental result concerning symmetries in mathematical physics was obtained by Emmy Noether, who made the connection between symmetries and conservation laws \cite{Noether} (see also the book \cite{IKS} of Yvette Kosmann-Schwarzbach for a review and historical perspective). The conserved currents and charges that one obtains using Noether's theorems are not only crucial in classical field theory, but also play important role in quantization.

In quantum field theory (QFT), in order to quantize models with local symmetries (including gauge theories and gravity), it is convenient to extend the original system (given in terms of a Lagrangian) by adding auxiliary fields (ghosts, antighosts, etc.). The extended system enjoys  a rigid symmetry called BRST symmetry (named after its proposers Becchi, Rouet, Stora \cite{r00149,r00110} and independently Tyutin \cite{r01653}) that is an extension of the original local symmetry. For this symmetry one can then construct the corresponding Noether charge and quantize this charge to  implement the symmetry at the quantum level. This idea is sometimes referred to as the \textit{quantum Noether method}
\cite{r00086}.

Soon after the BRST, a generalization of their method, called \textit{BV formalism}, has been proposed by Batalin and Vilkovisky \cite{BV81}. Later on, it was recognized that this framework for quantization of gauge theories has deep connections to homological algebra, graded geometry and modern concepts of derived geometry. There is plenty of literature on the subject, so it is hard to give a really comprehensive list. Here I just mention a few titles: \cite{HT} is a standard reference, \cite{r00076} gives a really comprehensive treatment of Yang-Mills theories, \cite{GPS94} provides a physicist-oriented review of the formalism for beginners and the recent review \cite{CM19} also includes treatment of theories with boundary. For connections to homolofical algebra, homotopy theory and derived geometry see for example \cite{r01658, r01695, PTVV, Calaque18}

Nowadays, the BV formalism plays a central role in two mathematically rigorous approaches to perturbative QFT: 
\begin{itemize}
		  \setlength\itemsep{-0.2em}
	\item \textbf{Perturbative algebraic quantum field theory (pAQFT)}, which is a mathematically rigorous framework for perturbative QFT that combines the main ideas of the axiomatic framework of algebraic QFT \cite{HK,Haag,Araki99} with perturbative methods for constructing models. The main contributions focusing on the scalar field are:  \cite{DF,DF02,DF04,DF05,DFloop,BreDue,Boas,BoasDue,BDF}. Abelian gauge theories were treated in \cite{DFqed}, while the Yang-Mills theories are the subject of \cite{H}. The full incorporation of the BV framework has been done in \cite{FR,FR3,Rej11b}.	
	\item \textbf{Factorisation algebras} approach by Costello and Gwilliam \cite{Cos,CoGw}.
\end{itemize}
For the relations between the two approaches, see \cite{GR20} and a later work \cite{BPS19}.

In this paper, I would like to focus on the conceptual understanding of the BV formalism from the point of view of pAQFT. I will focus on the definition of the classical and quantum BV operator. After reviewing the basics of the formalism, I will propose a new approach to defining the classical BV operator and show how this naturally leads to quantization in agreement with \cite{FR3}. In the second part, I will present some new, perhaps speculative, ideas on how one could incorporate the BV formalism into a non-perturbative framework, along the lines of \cite{BF19}.

I argue that the BV formalism in pAQFT is the result of applying the following principles to quantization of gauge theories:
\begin{itemize}
	  \setlength\itemsep{-0.2em}
	  	\item \textit{Locality}: imposing and preserving locality is crucial for renormalization and it allows one to identify the space of observables of interest. More on locality in pAQFT and its generalizations can be found in the recent review \cite{Rej19}.
	  \item \textit{Deformation}: in pAQFT, quantum models are built from the classical models by means of deformation quantization. The advantage is that one stays with the same space of observables and what changes is the product. More generally, in the \cite{FR3} version of the BV formalism, one also deforms differential graded algebras.
	  \item \textit{Homology}: one avoids building explicit quotients or spaces of orbits under a Lie algebra (or group) action. Instead, one uses homological algebra to construct their derived versions. This idea has been very fruitful in mathematics and turns out to be also very successful in understanding the foundations of QFT. 
\end{itemize}
Through the combination of these three principles one arrives at a formulation that is both mathematically rigorous and physically meaningful. I will now outline the basics of the resulting framework.

\section{Classical field theory}
I start with classical theory. I will follow the principles of \textit{locality} and \textit{homology} to construct models, which will then be easy to quantize using \textit{deformation}.
\subsection{Kinematical structure}
Let  $M$  be an oriented, time-oriented globally hyperbolic spacetime (the latter effectively means that it has a Cauchy surface). The conceptual results presented here hold for general theories with local gauge invariance, but as a running example I will work with the self-interacting Yang-Mills theory. 

Let $\Ecal$ denote the configuration space of the theory, which in this paper is always understood as the space of smooth sections of some vector bundle $E\xrightarrow{\pi} M$ over $M$. By choosing $\Ecal$, we decide what kind of objects the theory describes (e.g. scalar fields, tensor fields). We will need some notation:
\begin{itemize}
		  \setlength\itemsep{-0.2em}
	\item $\Ecal_c$ denotes the space of smooth compactly supported sections of $E$.
	\item  $\Ecal'$,  $\Ecal'_c$ denote complexifications of topological duals of  $\Ecal$ and  $\Ecal_c$ respectively (both equipped with the strong topology).
	\item $\Ecal^*$ denotes the space of smooth sections of the  dual bundle $E^*$. 
	\item $\Ecal^!$ denotes the complexification of the space of sections of $E^*$ tensored with the bundle of densities over $M$. By a slight abuse of notation, I use the symbol $\Ecal^!(M^n)$ for the complexified space of sections of the $n$-fold exterior tensor product of this bundle, seen as a vector bundle over $M^n$.
\end{itemize}
In this paper, I will typically denote the elements of $\Ecal$ by $\ph$, even if they carry indices. This makes the notation simpler. I will invoke the indices again later, when it becomes necessary.
\begin{exa}
	I will focus on two running examples:
	\begin{itemize}
			  \setlength\itemsep{-0.2em}
	\item \textit{Scalar field}. Here the configuration space is just $\Ecal=\Ci(M,\RR)$. 
	 \item \textit{Yang-Mills theories}. I consider
	$G$, a semisimple compact Lie group and $\frakk$ its Lie algebra. For simplicity, let's take the trivial bundle\footnote{Non-trivial bundles can also be treated, but in this paper I want to focus on the perturbative treatment of Yang-Mills theory, so restricting to trivial bundles is sufficient. For the non-perturbative treatment of the classical configuration space see \cite{BSS17}.} $P=M\times G$ over $M$ and define the off-shell configuration space of the Yang-Mills theory as $\Ecal=\Omega^1(M,\frakk)$.
	\end{itemize}
\end{exa}
We model classical observables as functionals on $\Ecal$. To make this mathematically precise, let's equip  $\Ecal$ with its natural Frech\'et topology and consider the space of (Bastiani) smooth functionals $\Ci(\Ecal,\CC)$ (see \cite{Bas64} for the details on the appropriate notion of smoothness). These will be our observables. To understand the physical motivation, note that classically, an observable assigns to a given field configuration a number, which corresponds to the value of the measurement of that observable (e.g. energy density at a given point in spacetime). The smoothness requirement assures that all the algebraic structures that we want to introduce on these observables are well defined.

The next important notion is that of a \textit{spacetime support} of a functional. It encodes localization properties of observables. Another crucial property is \textit{additivity}.
\begin{df}
	The spacetime support of a functional is defined by
	\begin{align}\label{support}
	\supp\, F=\{ & x\in M|\forall \text{ neighbourhoods }U\text{ of }x\ \exists \ph_1,\ph_2\in\E, \supp\, \ph_2\subset U 
	\\ & \text{ such that }F(\ph_1+\ph_2)\not= F(\ph_1)\}\ .\nonumber
	\end{align}
\end{df}
\begin{df}
	A functional $F$ is called \textit{additive} if
	\be\label{L:add}
	F(\ph+\chi+\psi)=F(\ph+\chi)-F(\chi)+F(\chi+\psi)\,,
	\ee
	for $\ph+\chi+\psi\in\Ecal$ and $\supp\,\ph\cap\supp\,\psi=\emptyset$. 
\end{df}
On one hand this property is as a weaker version of linearity and on the other hand it encodes \textit{locality}.  Recall that, in physics, a functional is called \textit{local}  if it can be written in the form:
\[
F(\ph)=\int\limits_M \omega(j^k_x(\ph))\,d\mu(x)\,,
\]
where $\omega$ is a function on the jet bundle over $M$ and $j^k_x(\ph)=(x,\ph(x),\pa \ph(x),\dots)$, with derivatives up to order $k$, is the $k$-th jet of $\ph$ at the point $x$. It was shown in \cite{BDGR} (based on ideas presented in \cite{BFR}) that local functionals can be characterised as smooth functionals that obey \eqref{L:add} and have smooth first derivatives. More about additivity and its  generalizations can be found in \cite{Rej19}.

The space of compactly supported smooth local functions on $\Ecal$ is denoted by $\Fcal_\loc$. The algebraic completion of $\Fcal_\loc$ with respect to the pointwise product
\be\label{prod}
F\cdot G(\ph)=F(\ph)G(\ph) \,,
\ee
is the commutative algebra $\Fcal$ of \textit{multilocal functionals}.

We also introduce \textit{regular functionals}. We say that $F\in \Fcal_{\reg}$ if all the derivatives $F^{(n)}(\ph)$ are smooth, i.e. for all $\ph\in\Ecal$, $n\in \NN$ we have
\[
F^{(n)}(\ph)\in\Ecal^!(M^n)\,.
\]

\subsection{Dynamics and symmetries}
To introduce dynamics, I use a generalization of the Lagrangian formalism, following \cite{BDF}. Ideally, we would like to be able to derive the equations of motion and symmetries from the action principle. The potential difficulty here is that the manifolds we are working with are non-compact so the integral of a Lagrangian density, for example 
$\frac{1}{2}(\nabla^\mu\ph \nabla_\mu\ph -m^2 \ph^2)$,
over the whole manifold $M$ does not converge, if $\ph$ is not compactly supported. One could be tempted to restrict attention to compactly supported configurations, but this does not work either, since the equations of motion we want to consider do not have non-trivial compactly supported solutions! To get around this obstruction, we smear the Lagrangian density with a cutoff function $f\in\Dcal\doteq\Ci_c(M,\RR)$ and define all the relevant objects (e.g. the Euler-Lagrange derivative) in a way which is independent of $f$. To make this more systematic, let's introduce the notion of a \textit{generalized Lagrangian}.
\begin{df}\label{Lagr}
	A \textit{generalized Lagrangian} on a fixed spacetime $M$ is a map $L:\Dcal\rightarrow\Fcal_{\loc}$ such that
	\begin{enumerate}[i)]
		\item $L(f+g+h)=L(f+g)-L(g)+L(g+h)$ for $f,g,h\in\Dcal$ with $\supp\,f\cap\supp\,h=\varnothing$ ({\bf Additivity}).
		\item $\supp(L(f))\subseteq \supp(f)$ ({\bf Support}).
		\item Let $\Gcal$ be the isometry group of the spacetime $M$ (for Minkowski spacetime we set $\Gcal$ to be the proper orthochronous Poincar\'e group $\Pcal^\uparrow_+$.). We require that $L(f)(g^*\ph)=L(g_*f)(\ph)$ for every $g\in\Gcal$ ({\bf Covariance}).
	\end{enumerate}
 I denote the space of all generalized Lagrangians by $\euL$. 
\end{df}
I also assume that the Lagrangians satisfy $L^*=L$ with respect to the involution $*$, which is for now just the complex conjugation, but when we get to graded geometry, the involution will also swap the order of factors.

Now it's time to get rid of the dependence on $f$.
\begin{df}[\cite{BDF}] 
	Actions $S(L)$ are defined as  equivalence classes of Lagrangians, where two Lagrangians $L_1,L_2$ are called equivalent $L_1\sim L_2$  if
	\be\label{equ}
	\supp (L_{1}-L_{2})(f)\subset\supp\, df\,, 
	\ee
	for all $f\in\Dcal$. 
\end{df}
The idea is, essentially to identify generalized Lagrangians whose defining Lagrange densities differ by a total derivative. From here on, I will use the notation $S$ rather than $S(L)$ and all objects and constructions that do not depend on the choice of representative in the equivalence class $S$ will be labelled with $S$ rather than $L$.
\begin{exa}
	The generalized Lagrangian of the free scalar field is
\[
L_0(f)[\ph]=\frac{1}{2}\int_M (\nabla^\mu\ph \nabla_\mu\ph -m^2 \ph^2)f d\mu_g\,.
\]
For the Yang-Mills theory, we have
\[
L_{\YM}(f)[A]=-\frac{1}{2}\int_M f\,\tr(F \wedge * F)\,,
\]
where $F=dA+\frac{i\lambda}{2}[A,A]$, $A\in\Ecal$,  $\lambda$ is the coupling constant, $*$ is the Hodge operator and $\tr$ is the trace in the adjoint representation, given by the Killing-Cartan metric $\kappa$.
\end{exa} 

Following \cite{BF19}, I introduce some further notation. 
\begin{df}\label{df:delta:L}
	Let $L\in \euL$, $\ph\in\Ecal$. Define a functional $\delta L:\Dcal\times\Ecal\rightarrow \RR$ by
	\[
	\delta L(\psi)[\ph]\doteq L(f)[\ph+\psi]-L(f)[\ph]\,, 
	\]
	where $\ph\in\Ecal$, $\psi\in \Dcal$ and $f\equiv 1$ on $\supp \psi$ (the map $\delta L(\psi)[\ph]$ thus defined does not depend on the particular choice of $f$).
\end{df}
The above definition can be turned into a difference quotient and we can use it to introduce the \textit{Euler-Lagrange derivative} of $S$.
The equations of motion are understood in the sense of \cite{BDF}. Concretely, the Euler-Lagrange derivative of $S$ is a 1-form on $\Ecal$, i.e. a map $dS:\Ecal\to\Ecal_c'$ defined by
\be\label{ELd}
\left<dS(\ph),\psi\right>\doteq \lim_{t\rightarrow 0}
\tfrac{1}{t}\delta L(t\psi)[\ph]=\int\frac{\delta L(f)}{\delta \ph (x)} \psi(x) \,,
\ee
with $\psi\in\Ecal_c$ and $f\equiv 1$ on $\supp \psi$. Here $\frac{\delta L(f)}{\delta \ph}$ is understood as an element of $\Ecal^!\subset \Ecal_c'$. The field equation is now the following condition on $\ph$:
\be
dS(\ph)\equiv 0\,,\label{eom}
\ee
	so geometrically, the solution space is the \textit{zero locus of the 1-form $dS$}. Note that $dS$ lives on $\Ecal$, rather than $M$! Let $\Ecal_S\subset \Ecal$ denote the space of solutions to \eqref{eom}. We are interested in the space $\Fcal_S$, of functionals on $\Ecal_S$. We will call them \textit{on-shell} functionals. 
\begin{exa} Examples of equations of motion:	
	\begin{itemize}
			  \setlength\itemsep{-0.2em}
\item	\textbf{Free scalar field}:
	$dS_0(\ph)=-(\Box+m^2)\ph$,
	where $\Box$ is the wave operator (d'Alembertian).
\item \textbf{Yang-Mills theory}:	
$dS_{\YM}(A)=D_A\!*\!F$,
where $D_A$ is the covariant derivative induced by the connection $A$. 
\end{itemize}
\end{exa}
\begin{rem}
For systems with several fields (or components), I will use the notation $\frac{\delta S}{\delta \ph^{\alpha}}$ for  $\frac{\delta L(f)}{\delta \ph^{\alpha}}$ evaluated at $f\equiv 1$ and treated as a component of the form $dS$. Here $\alpha$ runs from $1$ to $D$, where $D$ is the number of degrees of freedom of the system (for the scalar field it is 1, for the pure Yang-Mills it is equal to 4 times the dimension of $\mathfrak{k}$).
\end{rem}

Next, I  discuss \textit{symmetries}. These are directions in the configuration space $\Ecal$, around a given point, along which the action is constant. Geometrically, these are vector fields $X$ on $\Ecal$ such that
\[
\partial_X S\equiv 0\,,
\]
where
\[
\partial_X S\doteq \int\frac{\delta L(f)}{\delta \ph(x)} X(x) \,,\quad f\equiv 1\ \textrm{on}\ \supp X\,,
\]
and $X\in \Gamma(T\Ecal)$ is identified with a map from $\Ecal$ to $\Ecal_c^{\sst \CC}$. Note that $\partial_X S$ is just the insertion of the 1-form $dS$ into a vector field $X$.

Formally, we write
\[
X=\int  X(x) \frac{\delta}{\delta \ph(x)} \,,
\]
and identify the basis on the fiber $T_\ph\Ecal$ as the antifields $\frac{\delta}{\delta \ph(x)}\equiv \ph^\ddagger(x)$.

I will now focus on the situation, where any local symmetry of the system can be expressed as 
$$X=\omega\rho(\xi) + I\,,$$
 where $I$ is a symmetry that vanishes identically on $\Ecal_S$, $\omega$ is a local function from $\Ecal$ to $\Dcal$ (multiplication with an element of $\Gamma(T\Ecal)$ is defined fiberwise) and $\rho:\frakg_c\rightarrow \Gamma (T\Ecal)$ is a Lie-algebra morphism, arising from a given local action $\sigma$ of some Lie algebra $\frakg_c$ on $\Ecal$, by means of
 \[
 \rho(\xi)F[\ph]:=\left<F^{(1)}(\ph),\sigma(\xi)\ph\right>\equiv\int_M \frac{\delta F}{\delta \ph(x)}\sigma(\xi)\ph(x)\,.
 \]
 We assume $\frakg_c$ to be the space of smooth compactly supported (hence the subscript $c$) sections of some vector bundle over $M$ and  the action $\sigma$ on $\Ecal$ to be local.
\begin{exa}
For Yang-Mills theory, compactly supported local symmetries are given in terms of the Lie algebra $\frakg_c=\Gamma_c(M,\mathfrak{k})$ and the local action $\sigma$ is given by 
\[
\sigma(\xi)A:=d\xi+[A,\xi]=D_A\xi\,,\quad \xi\in\frakg_c\,.
\]
\end{exa}
The presence of local symmetries implies that the equations of motion have redundancies, or, in other words, that $\Ecal_S$, the zero locus of $dS$ consists of \textit{orbits} of the action $\sigma$ of $\frakg$ on $\Ecal$.  To see this explicitly, note that,  since $X$ is assumed to be local and compactly supported, it can be expressed in terms of some differential operator
\[
X^\alpha(x)[\ph]=Q^{\alpha}_{\ \beta}(\ph) \ph^\beta(x)= a(x)[\ph]\ph(x)+b^\mu(x)[\ph]\nabla_\mu\ph(x)+\dots\,,
\]
so the condition for $X$ to be a symmetry can be expressed as
\be\label{noether}
0=\int \frac{\delta S}{\delta \ph^\alpha(x)}X^\alpha(x) d\mu(x)=\int \ph^\beta(Q^{\alpha}_{\ \beta})^*\frac{\delta S}{\delta \ph^\alpha} d\mu\,,
\ee
where $*$ denotes the formal adjoint of a differential operator (obtained using integration by parts).  This is the second Noether theorem and it leads to the conclusion that $\frac{\delta S}{\delta \ph^\alpha}$, the equations of motion of the system, are not all independent. We will come back to this point in section \ref{Kc}. More on the relation between Noether's second theorem and the BV formalism can be found in \cite{FLS03}.

Ultimately, we are interested in functionals on the solution space $\Ecal_S$ that are \textit{invariant} under the action $\rho$ of the symmetries. We will denote this space by $\Fcal_{S}^{\inv}$.

\subsection{Homological interpretation}
Our goal is to characterize the space $\Fcal_{S}^{\inv}$ of symmetry-invariant on-shell functionals in a way that will facilitate quantization. Remember that our aim is not just to construct the classical theory, but rather to use it as a first step towards quantization.

 The conclusion from Noether's second theorem (which we have now re-phrased in a slightly different language, following) is that, in the presence of local symmetries, \textit{equations of motion have redundancies}, so the Cauchy problem is not well posed in such systems and one is tempted to remove the redundancy by taking the quotient by the action $\rho$ of infinitesimal symmetries.  However, following the guiding idea of \textit{homology}, instead of forming a quotient, we can go to a larger space where the equations of motion are better behaved and we can keep track of relations between equivalent solutions. 
 
 From the point of view of deformation quantization that we want to perform in the end, it is more convenient to work with vector spaces and encode the information about symmetries and equations of motion in maps between these vector spaces. Remarkably, the kind of algebra that one uses in this construction also started with Emmy Noether! 
 

\subsubsection{Koszul complex}\label{Kc}
We start with finding homological interpretation for $\Fcal_S$. The idea is to describe it as the quotient $\Fcal_{S}=\Fcal/\Fcal_0$, where $\Fcal_0$ is the space of functionals that vanish on $\Ecal_S$ (redundancy removal by quotienting). How do find such functionals? There is a nice geometrical way to do it.
 Recall that the space of solutions $\Ecal_S$ is the space on which the one form $dS$ vanishes. If we take a vector field $X$, then $\iota_{dS}X$ (the insertion of a one-form into a vector field) vanishes identically on $\Ecal_S$. If we assume $X$ to have appropriate locality properties, then  $\iota_{dS}X\in \Fcal_{0}$. Let's define $\V$ to be the space of multilocal vector fields on $\Ecal$ (for the precise definition, see \cite{Book}) and introduce a map $\delta_S:\V\rightarrow \Fcal$ by setting
 \[
 \delta_S(X):=-\iota_{dS}(X)\,.
 \]
Clearly, the image of $\delta_S$ is contained in $\Fcal_{0}$. One could ask the question whether it is in fact all of $\Fcal_{0}$. This is less obvious and depends on the system. It can be shown that $\Fcal_{0}$ is equal to the image of $\delta_S$, provided $S$ satisfies certain \textit{regularity conditions} (see e.g. \cite{r00075,HT}). One requires that the equations of motion of the system can be split into independent ones 
\be\label{indep}
\frac{\delta S}{\delta \ph^{\alpha}}(\ph)=0\,,\quad \alpha=1,\dots, N
\ee
and $D-N$ of dependent ones (the relations follow from Noether's second theorem \eqref{noether}), so that the full system of equations $dS(\ph)=0$ is fully equivalent to \eqref{indep}. Note that $\frac{\delta S}{\delta \ph^{\alpha}}$ are local functions from $\Ecal$ to $\Ecal$ (i.e. depend only on $\ph$ and its derivatives at a point), so can be seen as functions on the jet space (a $k$-jet of $\ph$ at point $x$ is essentially given by $(\ph(x),\partial\ph(x),\dots)$ with derivatives up to order $k$). Assume that $\frac{\delta S}{\delta \ph^{\alpha}}\equiv S_{\alpha}$, $\alpha=1,\dots,N$ can be used (at least locally) as the first $N$ coordinates on the jet space. It is crucial that functionals in $\Fcal_0$ are multilocal, so one can use the standard argument with the fundamental theorem of calculus to show that $\Fcal_{0}$ is equal to the image of $\delta_S$. I sketch it here for a local functional $F\in \Fcal_0$. I write $F$ as
\[
F(\ph)=\int_M \omega(u_1,\dots,u_k)d\mu=\int_M \tilde{\omega}(S_1,\dots,S_N,u_{N+1},\dots, u_{k})d\mu\,,
\]
where $u_1,\dots, u_{k}$ are some arbitrary fixed coordinates on the jet space. Since $F$ vanishes on the solution space, under the regularity assumption, we have $$\tilde{\omega}(0,\dots,0,u_{N+1},\dots, u_{k})=0\,.$$ 
We can then write
\[
F(\ph)=\sum_{\alpha=1}^{N}\int_M S_\alpha \int_{0}^{1} \frac{\partial\tilde{\omega}}{\partial S_\alpha}(\lambda S_1,\dots,\lambda S_N,u_{N+1},\dots, u_{k}) \, d\lambda d\mu\,.
\]
Next, one shows that the smoothness, locality and the fact that $\omega$ is compactly supported on $M$ imply that $ \int_{0}^{1} \frac{\partial\tilde{\omega}}{\partial S_\alpha(x)}(\lambda S_1,\dots,\lambda S_N,u_{N+1},\dots, u_{k}) \, d\lambda\equiv X^{\alpha}(x)$ defines a smooth local compactly supported vector field. I will always assume that the actions we consider satisfy the above regularity conditions.

Note that $\delta_S$ also ``knows'' about the symmetries, since the kernel of $\delta_S$ consists of those vector fields for which  
\[
\iota_{dS}(X)=\partial_X S\equiv 0\,,
\]
i.e. of \textit{symmetries}. We can summarize all we know up to now in the following chain complex:
	\[
	\begin{array}{c@{\hspace{0,2cm}}c@{\hspace{0,2cm}}c@{\hspace{0,2cm}}c@{\hspace{0,2cm}}c@{\hspace{0,2cm}}c@{\hspace{0,2cm}}c@{\hspace{0,2cm}}c@{\hspace{0,2cm}}c@{\hspace{0,2cm}}c}
	0&\xrightarrow{}&Sym&\hookrightarrow&\V&\xrightarrow{\delta_S}&\F&\rightarrow &0\\
	&&2&&1&&0&&&
	\end{array}
	\]
	where the numbers below mean \textit{grading} and they help to keep track of where things belong. 
	
	In homological algebra, \textit{homology groups} are defined by taking quotients of the kernel of the map going out of the space by the image of the map going into it. The 0th homology of our complex is $H_0=\F/\im(\delta_S)=\F/\F_0$, so it characterizes the space of  functionals on the solution space!
 Assume there are no non-trivial (not vanishing on $\Ecal_S$) local symmetries and let \textit{$\mathcal{K}\doteq\Big(\Lambda \V,\delta_S\Big)$} (this is the exterior algebra, built out of antisymmetrized tensor products). Then $\F_S=H_0(\mathcal{K})$ and higher homologies vanish.
 This is called the \textit{Koszul resolution}.

\subsubsection{Chevalley-Eilenberg complex}
Let's now consider a situation where local symmetries are present. Let $\frakg_c$ be the Lie algebra characterizing the infinitesimal local symmetries. Since we let them act as derivations on functionals that are themselves compactly supported, we can drop the requirement of compact support for the symmetries and consider $\frakg$ instead.

 We are interested in the space of symmetry-invariant observables, i.e. functionals $F$ such that
\[
\partial_{\rho(\xi)}F=0\,,
\]
for all $\xi\in\frakg$. Algebraically, the space of invariants under the action of a Lie algebra can be characterized using the 
\textit{Chevalley-Eilenberg complex}.

The underlying graded algebra of the Chevalley-Eilenberg complex is $\CEcal\doteq\Ci_\ml(\overline{\Ecal},\CC)$, the space of multilocal functionals on the graded manifold $\Ecal\oplus \frakg[1]\equiv \overline{\Ecal}$ (see \cite{Book} for a precise definition of this space). We identify functionals on $\frakg[1]$ with $\Lambda\frakg'$, the exterior algebra over $\frakg'$. The generators of $\Lambda\frakg'$ can be understood as evaluation functionals
\[
c^I(x)[\xi]\doteq \xi^I(x)
\]
and in physics these are called \textit{ghosts}. The grading of $\CEcal$ is called the \textit{ pure ghost number }$\#\pg$. We express $\CEcal$ as
$$\CE\!\doteq\big(\Lambda\frakg'\widehat{\otimes}\F,\gamma_{
\rm ce}\big)\,,$$
where $\widehat{\otimes}$ is the appropriately completed tensor product.

The Chevalley-Eilenberg differential $\gamma_{\mathrm{ce}}$ is constructed in such a way that it encodes the action $\rho$ of the gauge algebra $\frakg$ on $\Fcal$. For $F\in\Fcal$ we define $\gamma_{\mathrm{ce}} F\in \Ci_\ml(\Ecal,{\frakg'})$ as
\be\label{ChE1}
(\gamma_{\mathrm{ce}} F)(\ph,\xi)\doteq \partial_{\rho(\xi)}F(\ph)\,,
\ee
where $\xi\in\frakg$. In terms of evaluation functionals (i.e. ghosts):
\[
\gamma_{\mathrm{ce}} F=\partial_{\rho(c)} F\,.
\]
For a form $\omega\in \frakg'$, which doesn't depend on $\ph$ we set
$\gamma_{\mathrm{ce}} \omega(\xi_1,\xi_2)\doteq -\omega([\xi_1,\xi_2])$ which is an element of $\Lambda^2\frakg'$. Again, we can express this using evaluation functionals:
\[
\gamma_{\mathrm{ce}} c= -\frac{1}{2}[c,c]
\]
 If $F\in\F^\inv$  then $\gamma_{\mathrm{ce}} F\equiv 0$, so $H^0(\CE)$ characterizes the gauge invariant functionals.

\subsubsection{BV complex}
Now we combine gauge invariant and on-shell, to be able to characterize the space \textit{$\F_S^{\inv}$}. Note that $\CE$ is the space of multilocal compactly supported functionals on a graded manifold $\overline{\Ecal}=\Ecal\oplus \frakg[1]$, so instead of vector fields on $\Ecal$, we consider the vector fields on the \textit{extended configuration space}  $\overline{\Ecal}$. This way we obtain the \textit{BV complex}: $\BV$. Its underlying algebra is the algebra of \textit{multilocal polyvector fileds on $\overline{\Ecal}$}, i.e. the space of multilocal compactly supported functionals on the graded manifold
\[
\Ecal[0]\oplus\frakg[1]\oplus \Ecal^![-1]\oplus\frakg^![-2]\equiv  T^*[-1]\overline{\Ecal}\,.
\]
More concretely, elements of $\BVcal$ are multilocal functionals of the field multiplet $\ph^\alpha$ and of corresponding antifields $\ph_\alpha^\ddagger$, where the index $\al$ runs through all the physical and ghost indices. For graded functionals, we distinguish between the right $\frac{\delta_r }{\delta \ph^\al}$ and left derivatives $\frac{\delta_l }{\delta \ph^\al}$. We use the convention that the antifields are identified with the right derivatives.

The algebra $\BVcal$ has two gradings: the ghost number $\#\gh$ (the main grading) and the antifield number $\#\af$ (extra grading used later). Functionals of physical fields have both numbers equal to 0. Functionals of ghosts have $\#\af=0$ and $\#\gh=\#\pg$ (the ``pure ghost'' grading, a ghost $c$ has $\#pg=1$). All vector fields have a non-zero antifield number given by $\#\af(\ph_\alpha^\ddagger)=1+\#\pg(\ph^\alpha)$, and $\#\gh=-\#\af$.

$\BVcal$ seen as the space of graded multivector fields is equipped with a graded generalization of the Schouten bracket, called in this context \textit{the antibracket}, defined by
\be\label{eq:antibracket}
\{X,Y\}\doteq\sum_\alpha\left<\frac{\delta_r X}{\delta\ph^\alpha},\frac{\delta_l Y}{\delta\ph^\ddagger_\alpha}\right>-\left<\frac{\delta_r X}{\delta\ph^\ddagger_\alpha},\frac{\delta_l Y}{\delta\ph^\alpha}\right>\,.
\ee
The right derivation $\delta_S$ is not inner with respect to $\{.,.\}$, but locally it can be written as:
\[\delta_SX=\{X,L(f)\}\,, \quad f\equiv 1\ \textrm{on}\  \supp X\,,\ X\in \V\,.\]
We write this as $\delta_SX=\{X,S\}$. Similarly, one can find an action $\theta$ such that $\gamma X=\{X,\theta\}$ and we define the \textit{classical BV differential} as
\[
s=\{.,S+\theta\}\equiv\{.,S^{\ex}\}\,.
\]
We call $S^{\ex}$ the \textit{extended action}. The BV differential $s$ has to be nilpotent, i.e.: $s^2=0$, which leads to the \textit{classical master equation ({\cme})}:
\be\label{CME}
\{L^{\ex}(f),L^{\ex}(f)\}=0\,,
\ee
modulo terms that vanish in the limit of constant $f$.

The differential $s$ increases the ghost number by one (i.e. is of order 1 in $\#\gh$).
It can be expanded with respect to the antifield number as
\[
s=\delta+\gamma\,,
\]
where $\delta$ is of order -1 in $\#\af$ and is the extension of $\delta_S$, while $\gamma$ is of order 0 is the extension of $\gamma_{\mathrm{ce}}$. In general, there could be higher order terms as well, but I will not discuss this here.

Differential complex $(\BV,\delta)$ is called the \textit{Koszul-Tate complex} and in the simplest case discussed here, it is a resolution (it would not be a resolution if the symmetries were not independent).

Crucially, we have 
\be\label{eq:physical}
H^0(\BV,s)=\F_S^{\inv}
\ee
which is the reason to work with $\BV$ in the first place, as it contains the same information as $\F_S^{\inv}$, but has a simpler algebraic structure (quotients and spaces of orbits are resolved). 

To prove \eqref{eq:physical}, one uses the fact that the Koszul-Tate complex $(\BV,\delta)$ is a resolution (the only non-trivial homology is in degree $0$), so
\[
H^0(\BV, s)=H^0(H_0(\BV,\delta), \gamma)\,.
\]
Since $H_0(\BV,\delta)$ is by construction the space of on-shell functionals on $\overline{\Ecal}$ and the 0-th cohomology of $\gamma$ characterizes the invariants, we obtain the desired result.

In the next step, we introduce the gauge fixing using an automorphism $\alpha_\Psi$, defined on generators as
\[
\alpha_\Psi(\Phi^\ddagger_\beta(x))\doteq \frac{\delta \Psi(f)}{\delta \ph^\beta(x)}\,,\quad \alpha_\Psi(\Phi^I(x))=\Phi^I(x)
\] 
where $f(x)=1$ and $\Psi_M(f)$ is a fixed generalized Lagrangian of ghost number -1, called \textit{ gauge fixing fermion}. The choice of $\Psi_M$ determines the choice of gauge fixing. It can be easily seen that $\alpha_\Psi$ leaves the antibracket invariant and we choose it in such a way that the $\#\af=0$ part of the transformed action gives rise to hyperbolic equations (see \cite{FR} for details). 
\begin{exa}
	To implement a Lorenz-like gauge in Yang-Mills theory, we need to further extend the BV complex with antighosts $\bar{C}$ (in degree -1) and Nakanishi-Lautrup fields $B$ (in degree 0). These form a trivial pair, i.e.:
	\[
	s\bar{C}^I=iB^I\,\quad s B^I=0
	\]
	The new extended configuration space is written explicitly as
	\[
	\overline{\Ecal}=\Ecal\oplus\frakg[1]\oplus\frakg[0]\oplus\frakg[-1]\,.
	\]
	Since the new generators were introduced as a trivial pair, the cohomology of the resulting complex is the same as the original one, so \eqref{eq:physical} remains true also after this modification. The gauge-fixing fermion is then:
	\[
	\Psi_M(f)=i\int\limits_M f\left(\frac{\alpha}{2}\kappa(\bar{C},B)+\left<\bar{C},*d*\!A\right>\right)d\mu
	\]	
\end{exa}
To talk about the gauge-fixed theory, it is convenient to redefine the gradings. Let $\#\ta$ denote the \textit{total antifield number}, which is 1 for all the antifield generators and zero for fields. We decompose $s$ with respect to this grading and obtain two terms (which I again denote by $\delta$ and $\gamma$)
\[
s=\gamma+\delta\,,
\]
The total action is still denoted by $S^{\ex}$ and I will denote by $S$
the $\#\ta=0$ term in the action. Let $\theta:=S^{\ex}-S$. We can express
\[
\delta=\{.,S\}\,,\quad \gamma=\{.,\theta\}\,.
\]
Differential $\delta$ acts trivially on fields and on antifields it gives $\delta \ph^\ddagger_{\alpha}=\frac{\delta S}{\delta \ph_\alpha}$, so the \textit{gauge-fixed} equations of motion are now the equations of motion of $S$, which are hyperbolic. This implies that the homology of $\delta$ is concentrated in degree 0 (there can be no non-trivial local symmetries for hyperbolic equations!), so $(\BV,\delta)$ is a resolution and we again have
\[
\Fcal^{\inv}_S=H^0(\BV,s)=H^0(H_0(\BV,\delta),\gamma)\,.
\]
Taking $H_0(\BV,\delta)$ is interpreted as ``going on-shell''.
\subsection{Linearized theory}
We can split the extended action into the term $S_0$ that is quadratic in fields and antifields and the interaction therm $V$. $S_0$ can be written as 
\[
S_0=S_{00}+\theta_0\,,
\]
where $S_{00}$ is the term with $\#\ta=0$ and $\theta_0$ has  $\#\ta=1$. Similarly $V=V_{0}+\theta$ and we note that $S=S_{00}+V_{0}$ is the total antifield independent part of the action.

We define the linearized BRST differential by
\[
\gamma_0 F\doteq \{F,\theta_0\}\,,
\]
The total linearized BV differential $s_0$ is
\[
s_0=\delta_0+\gamma_0\,,
\]
where $\delta_0(\ph^\ddagger_\alpha)=-\frac{\delta S_{00}}{\delta\ph^\alpha}$, so the homology of $\delta_0$ describes the space of solutions to the linearized equations of motion. Denote 
$$\frac{\delta_l S_{00}}{\delta\ph^\al(x)}(\ph)\equiv P_{\alpha\beta}(x)(\ph^\beta(x))\,,$$
where each component $P_{\al\beta}$ is a differential operator. For simplicity, we will often write the equations of motion using the index-free notation: $P\ph=0$.

In a similar manner, I will denote
 $$\frac{\delta_r\delta_l \theta_0}{\delta\ph^\sigma(y)\delta\ph_\al^\dgr(x)}\equiv K^{\al}_{\phantom{\al}\sigma}(x)\delta(y-x)\,,$$
  where each $K^{\al}_{\ \sigma}$ is a differential operator.

The cohomology of $s_0$ is given by
\[
H^0(\BV,s_0)=H^0(H_0(\BV,\delta_0),\gamma_0)\,,
\]
since $(\BV,\delta_0)$ is a resolution. Taking $H_0(\BV,\delta_0)$ is again understood as ``going on shell''  (this time for the linearized theory).\footnote{Here we note a difference with \cite{H}, where, in the example of Yang-Mills theory, the terms $\nabla_\mu {A_I^\ddagger}^\mu$ in $s_0(c_I^\ddagger)$ and $\overline{C}^\ddagger_I$ in $s_0(B^\ddagger)$  were attributed to the ``$\delta_0$ part'' of $s_0$ rather than to the ``$\gamma_0$ part''. We will denote the operators used in \cite{H} by
$\tilde{\gamma}_0$ and $\tilde{\delta}_0$, with $s_0=\tilde{\gamma}_0+\tilde{\delta}_0$.
By direct inspection one can see that $\tilde\delta_{0}$ does not respect the total antifield grading (since for example $\tilde\delta_{0}(C_I^\ddagger)=id*d\overline{C}_I-dA^\ddagger_I$ has a term with $\#\ta=0$ as well as a term with $\#\ta=1$). Moreover, $\tilde\delta_{0}$ is not nilpotent
and it does not anti-commute with $\tilde\gamma_0$.
Hence, from the cohomological perspective, using $\delta_0$ and $\gamma_0$ is more natural than using $\tilde{\delta}_0$ and $\tilde{\gamma}_0$.}

Assume that the gauge fixing was done in such a way that $P$ is Green hyperbolic (for gauge theories and gravity this was shown in \cite{FR}), meaning that there exist unique retarded and advanced Green functions $\Delta^{\rm A/R}$, i.e. Green functions for the equations of motion operator $P$ such that
\[
\supp(\Delta^{\rm R}(f))\subset  J^+(\supp(f))\,,\qquad
\supp(\Delta^{\rm A}(f))\subset J^-(\supp(f))\,.
\]
We define the Pauli-Jordan function by
\[
\Delta=\Delta^{\rm R}-\Delta^{\rm A}\,.
\]
%
The {\cme} of the free theory allows one to prove some important properties that hold for $\Delta^{\rm A/R}$ and $\Delta$ (see e.g. \cite{H,Rej13}).
\begin{lemma}\label{gauge:inv:Delta}
	Assume that $S_{00}$ is invariant under $\gamma_0$ (i.e. the free {\cme} holds) and $S_{00}$ induces a normally hyperbolic system of equations of motion: $P\ph=0$. Let $\Delta^*$ be a retarded, advanced or causal propagator corresponding to $P$. It follows that $\Delta^*$ satisfies the ``consistency conditions'' (see \cite{H}):
	\be\label{const:cond}
	\sum_\sigma((-1)^{|\ph^\al|}K^{\al}_{\ \sigma}(x')\Delta^*(x',x)^{\sigma\gamma}+K^{\gamma}_{\ \sigma}(x)\Delta^*(x',x)^{\al\sigma})=0\,,
	\ee
\end{lemma}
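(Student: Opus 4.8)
\emph{Overall approach.} The plan is to turn the hypothesis---that $S_{00}$ is $\gamma_0$-invariant, i.e.\ the $\#\ta=0$ component of the free {\cme} holds---into a purely operatorial identity between the equation-of-motion operator $P$ and the operator $K$, and then to ``invert'' that identity against the Green function $\Delta^*$. The key starting observation is that $\gamma_0$ acts on the fields exactly as $K$: since $\theta_0$ is bilinear with $\#\ta=1$, one has
\[
\gamma_0\ph^\alpha=\{\ph^\alpha,\theta_0\}=\frac{\delta_l\theta_0}{\delta\ph^\ddagger_\alpha}=K^\alpha_{\ \sigma}\ph^\sigma\,,
\]
with $K^\alpha_{\ \sigma}$ the differential operator defined above. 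Using the derivation property of $\gamma_0$ and $\frac{\delta_r S_{00}}{\delta\ph^\alpha}=P_{\alpha\beta}\ph^\beta$, the invariance reads
\[
\gamma_0 S_{00}=\Big\langle \tfrac{\delta_r S_{00}}{\delta\ph^\alpha},\,\gamma_0\ph^\alpha\Big\rangle=\langle P\ph,K\ph\rangle=0\qquad\forall\,\ph\,,
\]
which is precisely the $\#\ta=0$ piece of $\{S_0,S_0\}=0$ (the $\#\ta=-1$ piece vanishes because $S_{00}$ is antifield-independent, and the $\#\ta=1$ piece is the separate nilpotency relation $\gamma_0^2=0$).

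\emph{From the master equation to an operator identity.} First I would polarize the quadratic identity $\langle P\ph,K\ph\rangle=0$ in $\ph\mapsto\ph_1+\ph_2$ to get $\langle P\ph_1,K\ph_2\rangle\pm\langle P\ph_2,K\ph_1\rangle=0$ for all $\ph_1,\ph_2$. Moving the differential operator $K$ onto the other slot by integration by parts introduces the formal transpose $K^T$, so that $\langle P\ph_1,K\ph_2\rangle=\langle K^TP\ph_1,\ph_2\rangle$; then, using that $P$ is formally self-adjoint (it is the Hessian of $S_{00}$, hence graded-symmetric), the second term rewrites as $\pm\langle PK\ph_1,\ph_2\rangle$. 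Since $\ph_1,\ph_2$ are arbitrary, this collapses to the local identity
\[
K^TP=c\,PK\,,\qquad c=\pm1\,,
\]
as maps $\Ecal\to\Ecal^!$, with the sign $c$ fixed by the Koszul grading of the field multiplet.

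\emph{Inverting against the propagator.} For $\Delta^*=\Delta^{\rm R}$ or $\Delta^{\rm A}$, Green-hyperbolicity of $P$ (guaranteed by the gauge fixing together with normal hyperbolicity) provides the two-sided inverse property $P\Delta^*=\Delta^*P=\1$. I would then sandwich the operator identity between two copies of $\Delta^*$,
\[
\Delta^*\big(K^TP\big)\Delta^*=c\,\Delta^*\big(PK\big)\Delta^*\ \Longrightarrow\ \Delta^*K^T=c\,K\Delta^*\,,
\]
the adjacent $P\Delta^*$ and $\Delta^*P$ each collapsing to $\1$. Rewriting this equality in terms of integral kernels, and undoing $K^T$ by one further integration by parts so that $K$ acts directly on the second argument of $\Delta^*$, reproduces exactly \eqref{const:cond}: the first term becomes $K^\alpha_{\ \sigma}(x')\Delta^*(x',x)^{\sigma\gamma}$ and the second $K^\gamma_{\ \sigma}(x)\Delta^*(x',x)^{\alpha\sigma}$, with the grading factor $(-1)^{|\ph^\alpha|}$ coming out of $c$ and the reordering. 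The causal propagator $\Delta=\Delta^{\rm R}-\Delta^{\rm A}$ cannot be handled by the same sandwiching, since it satisfies $P\Delta=0$ rather than $P\Delta=\1$; but \eqref{const:cond} is linear in $\Delta^*$, so the causal case follows at once by subtracting the retarded and advanced identities.

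\emph{Main obstacle.} The conceptual steps are short, and the genuine work is the bookkeeping of the ghost-grading signs: tracking the factors $(-1)^{|\ph^\alpha|}$ coherently through the polarization, the self-adjointness of $P$, and the two integrations by parts, so that the relative sign between the two terms of \eqref{const:cond} comes out correctly for both the bosonic and the fermionic (ghost/antighost) components of the multiplet. A secondary point to verify is that the compositions $K\Delta^*$ and $\Delta^*K^T$ are well defined as distributions; this is immediate because $K$ is a differential operator, hence continuous and support-preserving, so no wavefront-set obstruction to the products arises and the integrations by parts produce no boundary contributions.
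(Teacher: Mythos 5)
Your strategy is sound and is essentially the standard derivation of the consistency conditions: the $\#\ta=0$ component of the free {\cme} is $\{S_{00},\theta_0\}=\gamma_0 S_{00}=0$, which after polarization and integration by parts yields an intertwining relation between the differential operators $P$ and $K$, and the retarded/advanced cases then follow by composing with the Green operators, the causal case by subtraction. Note, however, that the paper does not actually prove this lemma: it defers to the references \cite{H,Rej13} and only checks the Yang--Mills case in the example that follows, where the relations reduce to the observation that $d$ and $\delta$ commute with the Hodge Laplacian. Your general argument is therefore a reconstruction rather than a restatement, and two points in it deserve tightening. First, the ``sandwiching'' step applies $\Delta^*P=\1$ to $K\Delta^*f$, which is supported in $J^\pm(\supp f)$ and hence not compactly supported; you should either invoke the canonical extension of the Green operators to sections with past/future-compact support (standard for Green-hyperbolic $P$), or, more cleanly, observe that the operator identity gives $P(K\Delta^{\rm R}f)=c^{-1}K^Tf$ and then appeal to uniqueness of the solution with past-compact support to conclude $K\Delta^{\rm R}f=c^{-1}\Delta^{\rm R}K^Tf$. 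Second, the sign $(-1)^{|\ph^\al|}$ is asserted to ``come out of $c$ and the reordering'' but never computed; since everything in \eqref{const:cond} beyond the schematic statement that $\gamma_0$ annihilates $\Delta^*$ as a bivector is precisely this relative sign between the bosonic and fermionic blocks of the multiplet, a complete proof must track it explicitly through the graded symmetry of the Hessian $P_{\al\beta}$ and the two transpositions of $K$.
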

\begin{exa}
For Yang-Mills, these identities are:
\[
K^{A}_C(x)\Delta^{*}(x,y)^{C\overline{C}}+K^{\overline{C}}_B(y) \Delta^{*}(x,y)^{A,B}=0\,,\qquad
-K^{\overline{C}}_B(x)\Delta^*(x,y)^{BA}+K^A_C(y)\Delta^*(x,y)^{\overline{C} C}=0\,,
\]
or more explicitly:
\begin{align*}
d_x \Delta_s^*(x,y)+\delta_y\Delta_v^*(x,y)&=0\,,\\
\delta_x \Delta_v(x,y) +d_y\Delta_s^* (x,y)&=0\,.
\end{align*}
These relations are an obvious consequence of the fact that $d$ and $\delta$ commute with the Hodge Laplacian.
\end{exa}

The classical linearized theory is constructed by introducing the Peierls bracket given by:
\begin{equation}\label{eq:Peierls}
\Pei{F}{G} = \sum_{\al,\beta} \skal{\frac{\delta^r F}{\delta\ph^\al}}{{\De}^{\al\beta}\frac{\delta^l G}{\delta\ph^\beta}},
\end{equation}
where $F, G \in\BV$. Unfortunately, $\BV$ is not closed under this bracket and one needs to extend it to a larger space. A good candidate is the space $\BV_\mc$ of \textit{microcausal} functions on $T^*[-1]\overline{\Ecal}$, i.e. functionals that are smooth, compactly supported and their derivatives (with respect to both $\ph$ and $\ph^\dgr$) satisfy the WF set condition:
\be\label{mlsc}
\WF(F^{(n)}(\ph,\ph^\dgr))\subset \Xi_n,\quad\forall n\in\NN,\ \forall\ph\in\overline{\E}(M)\,,
\ee
where $\Xi_n$ is an open cone defined as 
\be\label{cone}
\Xi_n\doteq T^*M^n\setminus\{(x_1,\dots,x_n;k_1,\dots,k_n)| (k_1,\dots,k_n)\in (\overline{V}_+^n \cup \overline{V}_-^n)_{(x_1,\dots,x_n)}\}\,,
\ee

\subsection{Classical BV operator and the M{\o}ller maps}
The key observation of \cite{FR3} is that one can \textit{define} the interacting quantum BV operator by taking the free one and twisting it with the quantum M{\o}ller map. Then, one has to prove that the resulting map is \textit{local}. The advantage of this viewpoint is that one separates the question of definition and existence of the quantum BV operator from the particular technical results one needs to establish its locality. The latter is crucial from the physical viewpoint, but mathematically, one could very well just go ahead with the non-local operator. To understand this idea better, it's good to first have a look at the classical case.

In the classical limit, the interacting classical BV operator should arise from a twist of the free one with the classical M{\o}ller map. Here we provide the direct proof that this is indeed the case.

Consider the theory with action $S=S_0+V$. For simplicity, we will treat  the interaction as a local compactly supported functional rather than a generlized Lagrangian. We will also omit the test function in $S_0$. 

For $S$ without non-trivial local symmetries, the inverse M{\o}ller map is defined as \cite{DF02} (see also \cite{HR}):
\be\label{class:M}
r_{\lambda V}^{-1}(F)(\ph)\doteq F(
{\mathtt r}^{-1}_{\la V}(\ph))\,,
\ee
where 
\be\label{class:M:field}
{\mathtt r}^{-1}_{\la V}(\ph) = \ph + \la \Delta^{\mathrm{R}} V^{(1)}(\ph) \,.
\ee
and  $\Delta^{\rm R}$ is the retarded Green function of the free theory. It can then be inverted as a formal power series to obtain the classical M{\o}ller map $r_{\lambda V}$, whereupon
\be\label{YangFeldmann}
{\mathtt r}_{\la V}(\ph)=\ph-\la\Delta_{S_0}^{\mathrm{R}}V^{(1)}({\mathtt r}_{\la V}(\ph))\,,
\ee
which is the Yang-Feldmann equation.

In this way of defining things, $r_{\lambda V}$  goes from the interacting to free theory and the image of $r_{\lambda V}$ represents the interacting fields constructed from the free ones. We also have the intertwining relation:
\[
\Pei{r_{\lambda V}F}{r_{\lambda V}G}=r_{\lambda V}\Pei{F}{G}_{V}\,,
\]
where  $\Pei{.}{.}$ and  $\Pei{.}{.}_{V}$ are the free and the interacting Poisson bracket, respectively. Moreover, it is easily seen that $r_{\lambda V}^{-1}$ maps the ideal generated by the free equations of motion to the ideal generated by the interacting equations of motion, i.e.:
\[
r_{\lambda V}^{-1}\frac{\delta S_0}{\delta \ph }=r_{\lambda V}^{-1}(P\ph)=P\ph+\lambda P\circ \Delta^{\rm R} V^{(1)}(\ph)=P\ph +\lambda V^{(1)}(\ph)\,.
\]
It is, therefore compatible with taking the quotients by the both ideals.

In the BV-extended version, we set ${\mathtt r}_{\la V}$ to act trivial on antifields and the result above about intertwining the ideals implies that
\be\label{Moller1}
r^{-1}_{\lambda V}\circ \delta_0=\delta \circ r^{-1}_{\lambda V}\,.
\ee
\begin{rem}
Here some caution is required. Since $r_{\lambda V}$ is a non-local map, the statement above \textit{does not imply} that the local cohomologies of $\delta$ and $\delta_0$ are the same! When restricted to local functionals, $\delta_0$ and $\delta$ yield different cohomologies, as one would expect. In the literature, one always computes the local cohomologies of $\delta_0$ and $\delta$, so the relation \eqref{Moller1} between these two operators has been apparently overlooked an might seem rather surprising on first sight.
\end{rem}

We now move on to the more complicated case, where gauge symmetries are present. First of all, now $S_0$ has two terms, one of which, $S_{00}$ does not depend on the antifields and this is the term that defines $P$ (and hence $\Delta^{\rm R}$). 

The formula for the M{\o}ller operator is the same as in the scalar case \eqref{class:M}, but we need to replace \eqref{class:M:field} with
\[
{\mathtt r}^{-1}_{V}(\ph^\alpha) = \ph^\alpha +  (\Delta^{\mathrm{R}})^{\alpha\beta}\, \frac{\delta_l V}{\delta\ph^\beta}(\ph) \,,
\]
and \eqref{YangFeldmann} with 
\[
{\mathtt r}_{V}(\ph^\alpha) = \ph^\alpha - (\Delta^{\mathrm{R}})^{\alpha\beta}\, \frac{\delta_l V}{\delta\ph^\beta}({\mathtt r}_{V}(\ph)) \,.
\]
\begin{thm}\label{MollerBV}
	Let $X\in\BV$ and assume that $S_0$ satisfies the classical master equation, then 
	\[
	r_{V}^{-1}(\{X,S_0\})  =\{r_{V}^{-1}(X),S_0+V\}-\int \frac{\delta_r X}{\delta\ph^\gamma(y)}({\mathtt r}^{-1}_{ V}(\ph))\Delta^{\rm R}(y,z)^{\gamma\beta}\frac{\delta_l}{\delta\ph^\beta(z)}(\cme(S))\,,
	\]
	where $\cme(S)$ is the classical master equations for the full theory.
\end{thm}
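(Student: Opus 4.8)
The plan is to prove the identity by a direct computation, expanding both sides in the antibracket and using the chain rule for the pullback $r_V^{-1}(X)=X\circ{\mathtt r}_V^{-1}$, where ${\mathtt r}_V^{-1}$ acts trivially on the antifields. Conceptually the statement is a twisted intertwining relation: it asserts that $r_V^{-1}$ conjugates the free BV operator $\{\cdot,S_0\}=s_0$ into the interacting one $\{\cdot,S_0+V\}=s$ up to a term proportional to the field-derivative of $\cme(S)$, so that once the full master equation holds one recovers $r_V^{-1}\circ s_0=s\circ r_V^{-1}$, the polyvector-field generalisation of the ideal-intertwining relation \eqref{Moller1}. As a sanity check, in the non-gauge case $\cme(S)$ vanishes identically (the action carries no antifield dependence), and the statement collapses to \eqref{Moller1}.

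First I would record the Jacobian of the M{\o}ller field transformation. Writing $\psi\doteq{\mathtt r}_V^{-1}(\ph)$, differentiating \eqref{class:M:field} gives
\[
\frac{\delta_r\psi^\gamma(y)}{\delta\ph^\mu(x)}=\delta^\gamma_\mu\,\delta(y-x)+\int(\Delta^{\rm R})^{\gamma\beta}(y,z)\,\frac{\delta_r\delta_l V}{\delta\ph^\mu(x)\,\delta\ph^\beta(z)}\,dz\,,
\]
together with an analogous block $\frac{\delta_r\psi^\gamma}{\delta\ph^\ddagger_\mu}$ coming from the antifield-dependence of $V$ (through $\theta$). Substituting these into the antibracket expansion of $\{r_V^{-1}(X),S_0+V\}$, the contributions in which the Jacobian yields its $\delta^\gamma_\mu$ part produce a ``diagonal'' expression of the form $\{X,\cdot\}(\psi)$, but with the derivatives of $S_0$ still evaluated at $\ph$ rather than at $\psi$.

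The crucial step is then to trade $\ph$ for $\psi$ in those derivatives. Because $S_{00}$ is quadratic one has $\frac{\delta_l S_{00}}{\delta\ph^\alpha}(\psi)=P_{\alpha\beta}\psi^\beta=P_{\alpha\beta}\ph^\beta+P_{\alpha\beta}(\Delta^{\rm R})^{\beta\gamma}\frac{\delta_l V}{\delta\ph^\gamma}(\ph)$, and the Green-function identity $P\Delta^{\rm R}=\1$ collapses the last term to $\frac{\delta_l V}{\delta\ph^\alpha}(\ph)$; this is exactly the Yang--Feldmann relation \eqref{YangFeldmann} behind \eqref{Moller1}, and it turns the derivatives-at-$\ph$ into derivatives-at-$\psi$, thereby reconstructing $r_V^{-1}(\{X,S_0\})$. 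All the remaining, genuinely off-diagonal terms carry the common prefactor $\frac{\delta_r X}{\delta\ph^\gamma}(\psi)(\Delta^{\rm R})^{\gamma\beta}$. Collecting them together with $\{r_V^{-1}(X),V\}$ and using the free master equation $\{S_0,S_0\}=0$ (so that $\cme(S)=\{S_0,V\}+\tfrac12\{V,V\}$) together with the consistency conditions of Lemma~\ref{gauge:inv:Delta}, one recognises the remaining bracketed factor as $\frac{\delta_l}{\delta\ph^\beta}\cme(S)$, which reproduces precisely the claimed correction term with the correct sign.

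The main obstacle I anticipate is the bookkeeping. The graded Koszul signs accompanying the interchange of left and right derivatives in the chain rule must be tracked carefully, the more so because $V$ is antifield-dependent, so that $\psi$ itself depends on $\ph^\ddagger$ and the Jacobian has a genuine $\frac{\delta_r\psi^\gamma}{\delta\ph^\ddagger_\mu}$ block that feeds into the second half of the antibracket. One must also ensure that $P\Delta^{\rm R}=\1$ and the consistency conditions are applied with the correct index placement and support structure, and that every manipulation is legitimate for microcausal functionals and well defined order by order in $\lambda$ as a formal power series. The $\theta_0$-contributions, where $K\Delta^{\rm R}$ does not collapse to the identity, are precisely where Lemma~\ref{gauge:inv:Delta} becomes indispensable, and checking that they recombine into $\frac{\delta_l}{\delta\ph^\beta}\cme(S)$ rather than leaving spurious non-local remainders is the delicate heart of the argument.
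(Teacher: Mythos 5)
Your proposal follows essentially the same route as the paper's proof: a direct chain-rule computation of the pullback $r_V^{-1}(X)=X\circ{\mathtt r}_V^{-1}$, the collapse $P\Delta^{\rm R}=\1$ turning derivatives of the quadratic part at $\ph$ into derivatives at ${\mathtt r}_V^{-1}(\ph)$, a Leibniz-rule regrouping of the off-diagonal terms carrying the prefactor $\frac{\delta_r X}{\delta\ph^\gamma}({\mathtt r}_V^{-1}(\ph))\Delta^{\rm R}$, and the consistency conditions of Lemma~\ref{gauge:inv:Delta} to dispose of the $K\Delta^{\rm R}$ remainders, with the free {\cme} assembling the result into $\frac{\delta_l}{\delta\ph^\beta}(\cme(S))$. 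The only organizational difference is that the paper carries this out as two separate lemmas for the $S_{00}$ and $\theta_0$ contributions before adding them, which is precisely the split you anticipate when you isolate the $\theta_0$-terms as the place where Lemma~\ref{gauge:inv:Delta} is indispensable.
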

\begin{cor}
	From theorem \ref{MollerBV} follows that, assuming {\cme}, we can write the classical BV operator of the full theory as
	\[
	s=r_V^{-1}\circ s_0\circ r_V\,,
	\]
	which is the classical analog of the definition of the quantum BV operator proposed in \cite{FR3}.
\end{cor}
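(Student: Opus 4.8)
The plan is to prove the identity by a direct computation, writing $r_V^{-1}$ as precomposition with the field redefinition $\psi^\gamma\doteq{\mathtt r}_V^{-1}(\ph)^\gamma=\ph^\gamma+(\Delta^{\rm R})^{\gamma\beta}\frac{\delta_l V}{\delta\ph^\beta}(\ph)$, which leaves the antifields fixed; throughout, $(\psi)$ abbreviates evaluation at $({\mathtt r}_V^{-1}(\ph),\ph^\ddagger)$. The essential point from the outset is that, once gauge structure is present, $V$ depends on the antifields, so $\psi$ is a function of both $\ph$ and $\ph^\ddagger$. Differentiating $r_V^{-1}X=X\circ(\psi,\id)$ by the chain rule therefore produces a Jacobian with a field-field block $A$ and a field-antifield block $B$,
\[
\frac{\delta_r(r_V^{-1}X)}{\delta\ph^\alpha}=\frac{\delta_r X}{\delta\ph^\gamma}(\psi)\,A^{\gamma}_{\ \alpha},\qquad \frac{\delta_r(r_V^{-1}X)}{\delta\ph^\ddagger_\alpha}=\frac{\delta_r X}{\delta\ph^\ddagger_\alpha}(\psi)+\frac{\delta_r X}{\delta\ph^\gamma}(\psi)\,B^{\gamma\alpha},
\]
with $A^{\gamma}_{\ \alpha}=\delta^\gamma_\alpha+(\Delta^{\rm R})^{\gamma\beta}\frac{\delta^2_l V}{\delta\ph^\beta\delta\ph^\alpha}$ and $B^{\gamma\alpha}=(\Delta^{\rm R})^{\gamma\beta}\frac{\delta^2_l V}{\delta\ph^\beta\delta\ph^\ddagger_\alpha}$ (written schematically, suppressing the spacetime kernels). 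The three structural inputs I would use are the Green identities $P\circ\Delta^{\rm R}=\Delta^{\rm R}\circ P=\id$, the free master equation $\{S_0,S_0\}=0$ together with its consistency conditions (Lemma~\ref{gauge:inv:Delta}), and the quadratic form of $S_0=S_{00}+\theta_0$.

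I would then expand both $r_V^{-1}(\{X,S_0\})$ and $\{r_V^{-1}(X),S_0+V\}$ with the antibracket and sort the result by whether it carries $\frac{\delta_r X}{\delta\ph^\ddagger}(\psi)$ or $\frac{\delta_r X}{\delta\ph^\gamma}(\psi)$. The terms carrying the antifield derivative of $X$ cancel identically: they read $-\frac{\delta_r X}{\delta\ph^\ddagger_\alpha}(\psi)\big[\frac{\delta_l S_0}{\delta\ph^\alpha}(\psi)-\frac{\delta_l(S_0+V)}{\delta\ph^\alpha}(\ph)\big]$, and the intertwining of ideals recalled above makes the bracket vanish, since $\frac{\delta_l S_0}{\delta\ph^\alpha}(\psi)=P_{\alpha\beta}\psi^\beta+\frac{\delta_l\theta_0}{\delta\ph^\alpha}=P_{\alpha\beta}\ph^\beta+\frac{\delta_l V}{\delta\ph^\alpha}(\ph)+\frac{\delta_l\theta_0}{\delta\ph^\alpha}=\frac{\delta_l(S_0+V)}{\delta\ph^\alpha}(\ph)$, using $P\circ\Delta^{\rm R}=\id$ and that $\frac{\delta_l\theta_0}{\delta\ph^\alpha}$ is field-independent. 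Thus the whole discrepancy is carried by $\frac{\delta_r X}{\delta\ph^\gamma}(\psi)$.

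The remaining terms are: the left-hand contribution $\frac{\delta_r X}{\delta\ph^\gamma}(\psi)K^{\gamma}_{\ \sigma}\psi^\sigma$; the right-hand contribution $\frac{\delta_r X}{\delta\ph^\gamma}(\psi)A^{\gamma}_{\ \alpha}\frac{\delta_l(S_0+V)}{\delta\ph^\ddagger_\alpha}$; and the new term $-\frac{\delta_r X}{\delta\ph^\gamma}(\psi)B^{\gamma\alpha}\frac{\delta_l(S_0+V)}{\delta\ph^\alpha}$ coming from the antifield block of the Jacobian. The $K\ph$ pieces cancel against the $\delta^\gamma_\alpha$ part of $A$, and I would rewrite what is left so that every surviving term carries an overall $(\Delta^{\rm R})^{\gamma\beta}$: the propagator-free term $-\frac{\delta_l V}{\delta\ph^\ddagger_\gamma}$ is reinstated as $-(\Delta^{\rm R})^{\gamma\beta}P_{\beta\mu}\frac{\delta_l V}{\delta\ph^\ddagger_\mu}$ via $\Delta^{\rm R}\circ P=\id$, and the $K(\Delta^{\rm R})$ term is moved across the propagator with the consistency conditions. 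Expanding $\cme(S)=\tfrac12\{S,S\}=\{S_0,V\}+\tfrac12\{V,V\}$ (the free term dropping by hypothesis) and differentiating, I would match the collected residue term by term with $(\Delta^{\rm R})^{\gamma\beta}\frac{\delta_l}{\delta\ph^\beta}\big(\{S_0,V\}+\tfrac12\{V,V\}\big)$: the reinstated $P$-term and the $K$-terms together with the $A$- and $B$-Jacobian blocks account for all contributions, leaving exactly $-\int\frac{\delta_r X}{\delta\ph^\gamma(y)}(\psi)\Delta^{\rm R}(y,z)^{\gamma\beta}\frac{\delta_l}{\delta\ph^\beta(z)}\cme(S)$.

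The main obstacle is not conceptual but combinatorial. Keeping the graded Koszul signs straight between left and right derivatives across all antibracket manipulations, and recognising the fully collected residue as the single functional $-\Delta^{\rm R}\frac{\delta_l}{\delta\ph}\cme(S)$, is delicate; the step that most needs care is the use of the two-sided Green identity $\Delta^{\rm R}\circ P=\id$ and of the consistency conditions at the level of the distributional kernels produced by the chain rule, so that the $B$-block term and the differentiated $\tfrac12\{V,V\}$ really coincide. Throughout, one should check that the nonlocal objects $\psi$ and $\Delta^{\rm R}V^{(1)}$ stay well defined as formal power series in the coupling and that support conditions are respected, so that the equality holds order by order.
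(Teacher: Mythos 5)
Your plan is correct and follows essentially the same route as the paper's proof of Theorem \ref{MollerBV} (which is where the content of the corollary lives): the chain rule for the M{\o}ller field redefinition (your Jacobian $B$-block is the paper's identity \eqref{eq:heart}), the Green identity $P\circ\Delta^{\rm R}=\id$, the Leibniz rule and the consistency conditions of Lemma \ref{gauge:inv:Delta}, after which the residue is recognised as $-\int\frac{\delta_r X}{\delta\ph^\gamma}({\mathtt r}^{-1}_{V}(\ph))\,\Delta^{\rm R}\,\frac{\delta_l}{\delta\ph^\beta}\,\cme(S)$ and dropped using the {\cme}. The only difference is organisational: the paper splits $S_0=S_{00}+\theta_0$ into two lemmas and solves \eqref{eq:heart} for $\frac{\delta_r X}{\delta\ph^\ddagger}({\mathtt r}^{-1}_{V}(\ph))$, whereas you sort terms by which derivative of $X$ they carry and note that the antifield-derivative terms cancel outright via the intertwining of the free and interacting ideals — a slightly cleaner bookkeeping of the same computation.
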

The proof of the theorem is rather technical, so we present it in the appendix.
\section{Quantization}
\subsection{Free theory}\label{free:theory} 
The quantized algebra of free fields is constructed by means of \textit{deformation quantization} of the classical algebra $(\BV_{\mc},\Pei{.}{.})$. To this end, we equip the space of formal power series $\BV_{\mc}[[\hbar]]$ with a noncommutative star product which corresponds to the operator product of quantum observables. 

Define  the $\star$-product (deformation of the pointwise product) by
\begin{equation*}
F\star G\doteq m\circ \exp({i\hbar D_W})(F\otimes G),
\end{equation*}
where $m$ is the multiplication operator, i.e. $m(F\otimes G)(\ph)=F(\ph)G(\ph)$, and  $D_W$ is the functional differential operator defined by
\begin{equation*}
D_W\doteq \frac{1}{2} \sum_{\al, \beta} \left<{W}^{\al\beta},\frac{\delta^l}{\delta\ph^\al} \otimes \frac{\delta^r}{\delta\ph^\beta}\right>\,.
\end{equation*}
with $W$, the \textit{2-point function of a Hadamard state}. $W$ is positive definite, satisfies the appropriate wavefront set condition \cite{Rad} and we have \textit{$W=\frac{i}{2}\Delta+H$}, where $H$ is a symmetric bisolution for $P$. In addition to these standard properties, we also need to require the\textit{ consistency condition }\cite{H} on the symmetric part:
\be\label{const:cond2}
\sum_\sigma((-1)^{|\ph^\al|}K^{\al}_{\ \sigma}(x')H(x',x)^{\sigma\gamma}+K^{\gamma}_{\ \sigma}(x)H(x',x)^{\al\sigma})=0\,,
\ee
Note that this is automatically fulfilled for $\Delta$ (see  \eqref{const:cond} ). Under this condition, $\gamma_0$ is a right derivation with respect to the star product. Since $W$ is a solution for the linearized equations of motion operator $P$, $\delta_0$ is also a right derivation with respect to $\star$. We can therefore conclude that
\[
s_0(X\star Y)=(-1)^{\#\gh (Y)}s_0X\star Y+X\star s_0Y\,.
\]
\subsection{Interacting theory}
\subsubsection{Time-ordered products}
For simplicity, we start our discussion by considering \textit{regular functionals} (functionals whose derivatives at every point are smooth compactly supported functions) on $T^*[-1]\overline{\Ecal}$. We use the notation $\BV_{\reg}$.

The \textit{time-ordering operator $\TT$} is defined as:
	\[
	\TT F(\ph)\doteq e^{\frac{\hbar}{2}\Dcal_{\Delta^{\rm F}}}\ ,
	\]
where, for an integral kernel $K$, we define
$$\Dcal_K\doteq \sum_{\al,\beta}\left<{K}^{\al\beta}, \frac{\delta^l}{\delta\ph^\al}\frac{\delta^r}{\ph^\beta}\right>$$
and
$\Delta^{\rm F}=\frac{i}{2}(\Delta^{\rm A}+\Delta^{\rm R})+H$. 

Formally, $\TT$ corresponds to  the operator of convolution with the oscillating Gaussian measure ``with covariance $i\hbar\Delta^{\rm F}$'',
	\[
	\TT F(\ph)\form \int F(\ph-\phi)\, d\mu_{i\hbar\Delta_F}(\phi) \ . 
	\]
We define the \textit{time-ordered product} $\T$ on $\BV_\reg[[\hbar]]$ by:
\[
		F\T G\doteq \Tcal(\Tcal^{\minus}F\cdot\Tcal^{\minus}G)
		\]
		\begin{rem}
Note that $\T$ is the time-ordered version of $\star$, in the sense that
$F\T G=F\star G$ if the support of $F$ is later than the support of $G$ and $F\T G=G\star F$, if the support of $G$ is later than the support of $F$.
\end{rem}
\subsubsection{Peierls bracket from the antibracket}
Before I continue with the interacting theory, I would like to address one more issue, often omitted in the literature: the precise relation between the antibracket \eqref{eq:antibracket} and the Peierls bracket \eqref{eq:Peierls}. The example of the scalar field has been discussed in \cite{GR20}. Here I give the general statement.

First of all, I need to introduce one more key concept from the AQFT axiomatic framework, which is yet another way to describe the dynamics of the theory. In AQFT, a QFT model is specified by assigning algebras of observables $\fA(\Ocal)$ to relatively compact regions $\Ocal\subset \Mcal$ of a given spacetime. In the original work of Haag and Kastler \cite{HK} these algebras were assumed to be $C^*$-algebra, but to allow for the use of perturbative methods and homological algebra, one has to weaken this assumption. In \cite{GR20}, together with Gwilliam, we use instead chain complexes in associative, unital $*$-algebras. Being unital means that they have a unit (one can think of it as the identity operator) and $*$ is an involution, the abstract notion of taking the adjoint of an operator. For a more general formulation using homotopical algebra see \cite{BSW19}.

\begin{exa}
	As an example, consider $\fA(\Ocal)=(\BVcal(\Ocal)[[\hbar]],\star)$, as defined in the previous section, where $\BVcal(\Ocal)$ is obtained by restricting to functionals supported inside a relatively compact $\Ocal\subset M$. It is clearly an associative algebra (the product is $\star$), the unit is the constant functional $1$ and the involution is complex conjugation. It is also a chain complex, where the differential is $s_0$. 
\end{exa}

Let $\Ocal\mapsto \fA(\Ocal)$, with differential $d$ and product $\star$, be an assignment of such chain complexes in algebras to regions. There are two important axioms to impose here:
\begin{itemize}
	\item {\bf Einstein causality:} for $\Ocal_1,\Ocal_2$ that are spacelike to each other, the commutar
	$[\fA(\Ocal_1),\fA(\Ocal_2)]=dX$ for some $X\in\fA(\Ocal)$ for any $\Ocal$ that contains both $\Ocal$.
	\item {\bf Time-slice axiom:} for any $\Ncal$ a neighborhood of a Cauchy surface\footnote{A hypersurface in $M$ such that every inextendible causal curve intersects it exactly once. Cauchy surfaces are used for formulating initial-value problems for normally hyperbolic operators, e.g. the wave operator.} in the region $\Ocal\subset M$, 
	the map $\fA(\Ncal)$ and  $\fA(\Ocal)$ are quasi-isomorphic, i.e. isomorphic on the level of cohomology groups.
\end{itemize}
The first axiom is the weak version of causality. The second is the quantum analog of well-posedness of the Cauchy problem.

Recall that $\star$ arises from the deformation of the Peierls bracket $\Poi{.}{.}$ \eqref{eq:Peierls}, so we will start in the quantum theory. Assume we have a theory $\fA$ with the product $\star$ and the differential $d$ that obeys the time-slice axiom and we have a time-ordered product $\T$ associated with $\star$. Take $F,G\in \fA(\Ocal)$ and consider Cauchy surfaces to the future and to the past of $\Ocal$, denoted $\Ncal_+$ and $\Ncal_-$. The time-slice axiom implies that there exist maps $\beta_-$ and $\beta_+$ such that
$\beta_+(F)$ is localized in the future of $\Ocal$ and $\beta_-(F)$ in the past. Using the time-slice axiom, modulo the image od $d$, we can write the $\star$-commutator of $F$ and $G$ as
\[
i\hbar [G,F]_{\star}=G\star F-F\star G=G\star \beta_+(F)-\beta_-(F)\star G
=G\T \beta_+(F)-\beta_-(F)\T G\quad \mathrm{mod}\ \mathrm{Im}d
\]
From the time-slice axiom follows also that there exists $\Psi$ such that $\beta_-F-\beta_+F=s_0\Psi$. Hence we rewrite the $\star$ commutator as
\[
	[G,F]_{\star}=G\star \beta_+(F)-\beta_-(F)\star G
	=G\T \beta_+(F)-\beta_-(F)\T G\\=G\T(\beta_-F-\beta_+F)=G\T d\Psi\,, 
\]
for some $\Psi$. Therefore, we can express the Peierls bracket as
\[
i\hbar\Poi{G}{F}=G\T s_0\Psi\quad \mathrm{mod}\ \hbar^2\,,\textrm{Im} d\,.
\]
Assume that $d G=0$. We can re-write the right-hand side using the antibracket as follows:
\[
i\hbar\Poi{G}{F}=s_0(G\T \Psi) +i\hbar \{G,\Psi\} \quad \mathrm{mod}\ \hbar^2\,,\textrm{Im} d\,.
\]
Hence
\[
\Poi{G}{F}=\{G,\Psi\} \quad \mathrm{mod}\ \hbar\,,\textrm{Im} \,,,
\]
which can be thought of as the \textit{intrinsic definition of the Peierls bracket, given the antibracket and the time-ordered product in a theory satisfying time-slice axiom}.
\subsubsection{Interaction}
We model interactions as functionals $V$ and for the moment assume $V\in\BV_\reg$. We define the quantum observable (of the free theory), associated with $V$, as $\TT V$. In the language of deformation quantization, we can say that we use $\TT$ as the \textit{quantization map}.  By analogy to normal ordering, we use the notation $\TT V\equiv \no{V}$.

We define the \textit{formal S-matrix}, $\Scal(\lambda \no{V})\in\BV_{\reg}((\hbar))[[\lambda]]$ by
	\[
	\Scal(\lambda V)\doteq e_{\sst{\TT}}^{i\lambda\no{V}/\hbar}=\TT(e^{i\lambda V/\hbar})\,.
	\]
 \textit{Interacting fields} are elements of $\BV_{\reg}[[\hbar,\lambda]]$ given by
	\[
	R_{\la V}(F)\!\doteq\! (e_{\sst{\TT}}^{i\la \no{V}/\hbar})^{\star \minus}\star (e_{\sst{\TT}}^{i\la \no{V}/\hbar}\T \no{F})=-i\hbar\frac{d}{d\mu}\Scal(\la V)^{-1}\Scal(\la V+\mu F) \big|_{\mu=0}
	\]
For $\lambda=0$, we recover $R_{0}(F)=\no{F}$. We define the \textit{interacting star product} as:
	\[
	F\star_{int} G\doteq R_V^{\minus}\left(R_V(F)\star R_V(G)\right)\,,
	\]
\subsubsection{Renormalization problem}
The problem that one faces is that interesting interactions and observables are local, but not regular. In fact, polynomial local functionals of order greater than one cannot be regular, as illustrated in the example below.
\begin{exa}
	Consider the free scalar field and the functional
	$$F(\ph)=\int f(x,y)\ph(x)\ph(y)d\mu(x) d\mu(y)\,,\ \  f\in\Dcal(M^2):= \Ci_c(M^2,\RR)\,,$$
	which is regular. Now constrast it with
	$$F(\ph)=\int f\ph^2d\mu=\int f(x)\delta(x-y)\ph(x)\ph(y)d\mu(x) d\mu(y)\,,$$
	which is local, but fails to be regular, since the second derivative is:
	\[
	F^{(2)}(\ph)(x,y)=f(x)\delta(x-y)d\mu(x) d\mu(y)\,,
	\]
	i.e. it is not smooth.
\end{exa}
Because of singularities of $\De^{\rm F}$, the time-ordered product $\T$ is not well defined on local, non-linear functionals, but the \textit{physical interactions are usually local}!

The  \textit{renormalization problem} is then to extend $\Scal$ to local arguments by extending time-ordered products:
\[
\Scal(V)=\sum\limits_{n=0}^\infty \frac{1}{n!} \TT_{n}(V,...,V)\,.
\]
We note that the time-ordered product $\TT_{n}(F_1,...,F_n)\doteq F_1\T...\T F_n$ of $n$ local functionals is well defined if their supports are pairwise disjoint. To extend  $\TT_{n}$ to arbitrary local functionals we use the causal approach of Epstein and Glaser (causal perturbation theory). The crucial property one uses in this process is the  \textit{causal factorization property}: if the supports of $F_1\ldots F_k$ are later than the supports of $F_{k+1},\ldots F_n$, then
\be\label{CFP}
\TT_{n}(F_1\otimes \dots \otimes F_n)=
\TT_{k}(F_1\otimes \dots \otimes F_k) \star
\TT_{n-k}(F_{k+1} \otimes \dots \otimes F_n) \, ,
\ee

\subsection{{\qme} and the quantum BV operator}
In the framework of \cite{FR3}, an important role is played by the condition that the S-matrix is invariant under the free classical BV operator:
\be\label{invmatrix}
s_0\left(e_{\sst{\TT}}^{i \no{V}/\hbar}\right)=0\,,
\ee
There is a very useful identity satisfied by $\TT$, namely:
\be\label{qbv1}
\delta_{0}(\TT F)=\TT(\delta_0 F-i\hbar\Lap F)\,,
\ee
where $\Lap$ is the \textit{BV Laplacian}, defined by:
	\be\label{BVLap}
\Lap X=(-1)^{(1+\#\gh(X))}\sum_\alpha\int dx \frac{\delta_r^2 X}{\delta\ph^\alpha(x)\delta\ph^\ddagger_\alpha(x)}\,.
\ee
Moreover, from the consistency conditions \eqref{const:cond2} follows that
\be\label{qbv2}
\TT\circ \gamma_0=\gamma_0\circ \TT
\ee
Putting these two together, we note that the left-hand side of \eqref{invmatrix} can be rewritten as:
	\[
s_0\left(e_{\sst{\TT}}^{i \no{V}/\hbar}\right)=\TT \left(s_0 e^{i  V/\hbar}-i\hbar \Lap e^{i  V/\hbar} \right)=\TT \left( e^{iV/\hbar} \left(\frac{i}{\hbar } \{V,S_0\}+\frac{i}{2\hbar}\{ V,V\}  +\Lap(V)\right)
\right)	\]
Setting $\Lap S_0=0$ (for symmetry reasons) and using the classical master equation, we can conclude that
\[
s_0\left(e_{\sst{\TT}}^{i \no{V}/\hbar}\right)=\frac{i}{\hbar}\,e_{\sst{\TT}}^{i \no{V}/\hbar}\T \TT\left(\frac{1}{2}\{S_0+V,S_0+V\}-i\hbar\Lap(S_0+V)\right)
\]
and we observe that the condition \eqref{invmatrix} is in fact equivalent to the \textit{quantum master equation} ({\qme}):
	\[
		\frac{1}{2}\{S_0+V,S_0+V\}=i\hbar\Lap(S_0+V)\,,
		\]
understood as a condition on $V$, which turns out to be important for the locality of the \textit{quantum BV operator}. In the free theory, we define it as follows:
\be\label{quantumBV}
\hat{s}_0\doteq \TT^{-1}\circ s_0 \circ \TT\,,
\ee
so from \eqref{qbv1} and \eqref{qbv2} follows that
\[
\hat{s}_0=s_0-i\hbar \Lap\,.
\]
In the interacting theory, the quantum BV operator $\hat{s}$ is defined on regular functionals by:
		\[
		\hat{s}=R_V^{-1} \circ s_0\circ R_V\,,
		\]
so it is the twist of the free classical BV operator by the \textit{(non-local!)} map that intertwines the free and the interacting theory. The classical limit of this definition makes sense, as demonstrated in theorem \ref{MollerBV}.

The 0th cohomology of $\hat{s}$ characterizes \textit{quantum gauge invariant observables}. Assuming \qme, 
\[
\hat{s}F=e_{\sst{\TT}}^{-i\no{V}/\hbar}\T s_0\left(e_{\sst{\TT}}^{i\no{V}/\hbar}\T\no{F}\right)= \{F,S_0+V\}-i\hbar \Lap(F)=s_0-i\hbar \Lap(F) \,.
\]
The second equality is particularly striking, since it shows that $\hat{s}$ is \textit{local}. In contrast to other frameworks, in our approach {\qme}
is not necessary for the nilpotency of $\hat{s}$ (this is true by definition), but is crucial for its \textit{locality}. 
\subsection{Renormalized {\qme} and quantum BV operator}
	To extend {\qme} and $\hat{s}$ to local observables, I replace now $\T$ with the renormalized time-ordered product.
	\begin{thm}[\cite{FR3}]
		The renormalized time-ordered product $\TR$ is an associative product on $\TTR(\F)$ given by
		\[
		F\TR G\doteq\TTR(\TTR^{\minus}F\cdot\TTR^{\minus}G)\,,
		\]
		where $\TTR:\F[[\hbar]]\rightarrow\TTR(\F)[[\hbar]]$ is defined as
		\[
		\TTR=(\oplus_{n}\TTR^{\,n})\circ\beta\,,
		\]
		where $\beta:\TTR:\F\rightarrow S^\bullet\F^{(0)}_\loc$ is the inverse of multiplication $m$ and we set $\TTR\big|_{\Fcal_\loc}=\id$ (so $\no{V}=V$).
	\end{thm}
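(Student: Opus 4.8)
The plan is to prove the statement by \emph{transport of structure}: I will show that $\TTR$ is a linear isomorphism which, by its very definition, carries the (commutative, associative) pointwise product on $\F[[\hbar]]$ to the product $\TR$ on $\TTR(\F)[[\hbar]]$, so that associativity of $\TR$ becomes a purely formal consequence of associativity of the pointwise product. The genuine work therefore lies not in associativity itself but in establishing that $\TTR$ is well defined and invertible, together with the coherence of the renormalized products $\TTR^{\,n}$ out of which it is built.

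First I would check that $\beta$ is well defined, i.e. that the multiplication $m\colon S^\bullet\F^{(0)}_\loc\to\F$ is a linear isomorphism, so that $\beta=m^{\minus}$ makes sense. This is where the multilocal structure of $\F$ enters: every multilocal functional factors \emph{uniquely} into a symmetric product of local functionals, which is precisely the assertion that $\F$ is the free commutative algebra generated by $\F^{(0)}_\loc$; I would invoke the corresponding structural result from the references. Granting this, $\TTR=(\oplus_n\TTR^{\,n})\circ\beta$ is well defined, and I would observe that at lowest order in $\hbar$ each renormalized $n$-fold time-ordered product $\TTR^{\,n}$ reduces to the ordinary $n$-fold multiplication (no contractions occur at $\hbar^0$), so that $\oplus_n\TTR^{\,n}=m+O(\hbar)$ and hence $\TTR=\id+O(\hbar)$ on $\F[[\hbar]]$. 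A formal deformation of the identity is invertible, its inverse being the $\hbar$-adically convergent Neumann series $\TTR^{\minus}=\sum_{k\ge0}(\id-\TTR)^k$; this both justifies the symbol $\TTR^{\minus}$ in the statement and exhibits $\TTR$ as a bijection onto $\TTR(\F)[[\hbar]]$.

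With invertibility in hand, the formula $F\TR G\doteq\TTR(\TTR^{\minus}F\cdot\TTR^{\minus}G)$ is by construction the pullback of the pointwise product along the bijection $\TTR^{\minus}$, so $\TTR^{\minus}$ intertwines $\TR$ with $\cdot$. Closure, $F\TR G\in\TTR(\F)$, is immediate since $\F$ is closed under $\cdot$. Associativity then follows formally: applying $\TTR^{\minus}$ to $(F\TR G)\TR H$ and to $F\TR(G\TR H)$ both produce $\TTR^{\minus}F\cdot\TTR^{\minus}G\cdot\TTR^{\minus}H$, and injectivity of $\TTR^{\minus}$ forces the two to coincide; the same computation yields commutativity.

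The main obstacle is thus displaced to the construction and coherence of the $\TTR^{\,n}$ themselves. One must extend the naive products $\TTR^{\,n}(F_1,\dots,F_n)=F_1\T\cdots\T F_n$, which a priori are defined only for pairwise disjoint supports, to \emph{all} local arguments via the Epstein--Glaser inductive procedure, while preserving symmetry and the causal factorization property \eqref{CFP}. Symmetry is what lets $\oplus_n\TTR^{\,n}$ descend to the symmetric algebra $S^\bullet\F^{(0)}_\loc$ (so that composing with $\beta$ is legitimate), and causal factorization is what guarantees that $\TR$ restricts to the star product $\star$ on factors with causally ordered supports, i.e. that the transported product genuinely deserves the name \emph{renormalized time-ordered product}. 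I would therefore treat the Epstein--Glaser extension and these two coherence conditions as the technical heart of the argument, and present the associativity of $\TR$ as the clean formal corollary described above.
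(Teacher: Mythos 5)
The paper itself contains no proof of this theorem---it is imported verbatim from the Fredenhagen--Rejzner (2011) reference---but your reconstruction is precisely the argument of that source: associativity of $\TR$ is a formal consequence of transporting the pointwise product along the bijection $\TTR$, and the genuine content lies in the bijectivity of $m\colon S^\bullet\F^{(0)}_\loc\to\F$ (unique factorization of multilocal functionals into symmetric products of local ones) together with the Epstein--Glaser construction of symmetric $\TTR^{\,n}$ obeying causal factorization. Your identification of these two points as the technical heart, and the observation that $\oplus_n\TTR^{\,n}=m+O(\hbar)$ makes $\TTR=\id+O(\hbar)$ invertible via the $\hbar$-adic Neumann series, is correct and at the right level of detail for what the statement demands.
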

Since $\TR$ is an associative, commutative product, we can use it in place of $\T$ and define the renormalized {\qme} and the quantum BV operator using formulas \eqref{invmatrix} and \eqref{quantumBV}.
These formulas get even simpler if we use the \textit{anomalous Master Ward Identity} \cite{BreDue,H}:
\be\label{MWI:inf:antif}
s_0(e_\TTR^{i\no{V}/\hbar})\equiv \{e_{\sst{\TTR}}^{iV/\hbar},S_0\}=\frac{i}{\hbar}e_{\sst{\TTR}}^{iV/\hbar}\TR(\tfrac{1}{2}\{V+S_0,V+S_0\}_{\TTR}-i\hbar \Lap_V)\,,
\ee
where $\Lap_{V}$ is identified with the \textit{anomaly term}.
If $S_0$ does not depend on antifields, \eqref{MWI:inf:antif} reduces to:
\be\label{MWI:inf}
\int \left(e_{\sst{\TTR}}^{iV/\hbar}\TR \frac{\delta V}{\delta \ph^\ddagger(x)}\right)\star\frac{\delta S_0}{\delta\ph(x)}=e_{\sst{\TTR}}^{iV/\hbar}\TR(\tfrac{1}{2}\{V+S_0,V+S_0\}_{\TTR}-i\hbar\Lap_V)\,,
\ee
\begin{rem}
Note that for regular $V$ one has
\begin{align*}
s_0(e_\TT^{i\no{V}/\hbar})&=\TT(\hat{s}_0e^{iV/\hbar})=\TT(s_0e^{iV/\hbar}-i\hbar \Lap e^{iV/\hbar})\\
&=\frac{i}{\hbar}e^{i\no{V}/\hbar}\T\TT\left(\{V,S_0\}+\frac{1}{2}\{V,V\}-i\hbar\Lap V\right)
=\frac{i}{\hbar}e^{i\no{V}/\hbar}\T\TT\left(\frac{1}{2}\{S_0+V,S_0+V\}-i\hbar\Lap V\right)\,,
\end{align*}
so the MWI is the renormalized version of this identity.
\end{rem}
The \textit{renormalized quantum master equation} is therefore:
\[
\tfrac{1}{2}\{V+S_0,V+S_0\}-i\hbar \Lap_V=0\,.
\]
Replacing $V$ with $V+\lambda F$ in \eqref{MWI:inf:antif} and differentiating with respect to $\lambda$ leads to the following identity for the classical BV operator:
\[
s_0(e_{\TTR}^{i\no{V}/\hbar}\TR F)=\frac{i}{\hbar}e_{\sst{\TTR}}^{iV/\hbar}\T\TTR\left(\{F,V+S_0\}_{\TTR}-i\hbar\Lap_VF+\frac{i}{\hbar} F \left(\frac{1}{2}\{S_0+V,S_0+V\}-i\hbar\Lap_V\right)\right)
\]
where $\Lap_V(F)\doteq \tfrac{d}{d\lambda} \Lap_{V+\lambda F}\big|_{\lambda=0}$. Assuming that the renormalized {\qme} holds, this reduces to:
\[
s_0(e_\TTR^{i\no{V}/\hbar}\TR F)=\frac{i}{\hbar}e_{\sst{\TTR}}^{iV/\hbar}\TR\TTR\left(\{F,V+S_0\}_{\TTR}-i\hbar\Lap_VF\right)\,,
\]
so the renormalized BV operator takes the form:
\[
			\hat{s}F=\{F,V+S_0\}-i\hbar\Lap_V(F)\,,
\]
 Hence, by using the renormalized time ordered product $\TR$, we obtained in place of $\Lap(X)$, the interaction-dependent operator $\Lap_V(X)$ (the anomaly). It is of order $\mathcal{O}(\hbar)$ and local.
 In the renormalized theory, $\Lap_V$ is well-defined on local vector fields, in contrast to $\Lap$.
\section{Towards a non-perturbative formulation}
\subsection{Local S-matrices}
In a recent paper \cite{BF19} Buchholz and Fredenhagen have shown that one can formulate interacting quantum theory of the scalar field in terms of local S-matrices $\Scal$, treated as a family of unitaries labelled by local functionals, generating a $C^*$-algebra. One then imposes relations that the S-matrices satisfy.

Let $F_1,F_2$ be local functionals and let $F_1\prec F_2$ denote the relation: $\supp F_1$ is not to the future of $\supp F_2$ (i.e. $\supp F_1$ does not intersect $J^+(\supp F_2)$). Local S-matrices are required to satisfy:
	\begin{enumerate}[{\bf S1}]
		\item {\bf Identity preserving}:\label{S:start} $\Scal(0)=\1$.
		\item {\bf Locality}:\label{S:loc} $\Scal$ satisfies the Hammerstein property, i.e. 
		$F_1\prec F_2$ implies that
		\[
		\Scal(F_1+F+F_2)=	\Scal(F_1+F)\Scal(F)^{-1}	\Scal(F+F_2)\,,
		\]
		where $F_1, F, F_2\in \F_\loc$.
	\end{enumerate}


Using time-ordered products and star products, one can construct a concrete realization of local S-matrices.  Following \cite{BF19}, I will denote by $\fA$ the group algebra over $\CC$ of the free group generated by elements $\Scal(F)$, $F\in \Fcal_{\loc}$, modulo relations {\bf S\ref{S:loc}} and {\bf S\ref{S:start}}. Additionally, for a fixed  $L\in \euL$ (this is interpreted as the Lagrangian of the theory), one defines $\fA_{L}$ by also quotienting by the following relation proposed by \cite{BF19} that encodes the dynamics:
\be\label{eq:SD}\tag{\bf S3}
\Scal(F)\Scal(\delta L(\ph))=\Scal(F^{\ph}+\delta L(\ph))=\Scal(\delta L(\ph))\Scal(F).
\ee
where  $F^{\ph}(\psi)\doteq F(\ph+\psi)$, $\ph,\psi\in \Ecal$ and $\delta L$ is given in Def.~\ref{df:delta:L}.

Physically, \eqref{eq:SD} is interpreted as the \textit{Schwinger-Dyson equation} on the level of local S-matrices.

\subsection{Schwinger-Dyson equation from translation symmetry}
In BV formalism in finite dimensions, the Schwinger-Dyson equation is the consequence of the translation invariance of the path integral measure. Using the formal S-matrix language, the condition {\bf\ref{eq:SD}} should be an expression of symmetry under translations in $\Ecal$.

Firstly, note that the group $(\Ecal_c,+)$ of compactly supported configurations acts on $\Ecal$ by $\sigma_\psi(\ph)=\ph+\psi$, where $\ph\in\Ecal$, $\psi\in\Ecal_c$. This induces the following map:
\begin{align*}
\alpha&:\Ecal_c\rightarrow \Aut(\euL)\\
\alpha_\psi(L)(f)[\ph]&\doteq L(f)[\sigma_\psi(\ph)]-L(f)[\ph]\,,
\end{align*}
or in shorthand notation
\[
\alpha_\psi(L)\doteq \sigma_\psi^*L-L\,.
\]
\begin{prop}
	The map $\alpha$ defined above is a 1-cocycle of $\Ecal_c$ in $ (\Aut(\euL),+)$, where the addition is the pointwise addition inherited from $\euL$.
\end{prop}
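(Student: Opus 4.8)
The plan is to verify the cocycle identity directly from the definition of $\alpha_\psi$ as a ``pullback minus identity'' operator, after first pinning down the $\Ecal_c$-module structure that makes the word ``$1$-cocycle'' meaningful. The key structural observation is that the shift $\sigma_\psi(\ph)=\ph+\psi$ is a (left) action of the abelian group $(\Ecal_c,+)$ on $\Ecal$, so that $\sigma_{\psi+\chi}=\sigma_\psi\circ\sigma_\chi=\sigma_\chi\circ\sigma_\psi$ and $\sigma_0=\id$. Dualising by pullback, $\psi\mapsto\sigma_\psi^*$ is a homomorphism of $\Ecal_c$ into the linear automorphisms of the ambient space of additive functionals; I take as coefficient object the maps of this space to itself, equipped with the $\Ecal_c$-action $\psi\cdot T\doteq\sigma_\psi^*\circ T$ (post-composition) and the pointwise addition inherited from $\euL$. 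With these conventions the statement unwinds to the single condition
\[
\alpha_{\psi+\chi}=\alpha_\psi+\sigma_\psi^*\circ\alpha_\chi\,,\qquad\psi,\chi\in\Ecal_c\,.
\]

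First I would check that $\alpha$ lands in the right space, i.e.\ that $\sigma_\psi^*L$, and hence $\alpha_\psi(L)=\sigma_\psi^*L-L$, is again of the required type for every $L\in\euL$; this amounts to the axioms of Definition~\ref{Lagr}. Additivity in the test function and the support property $\supp((\sigma_\psi^*L)(f))\subseteq\supp(f)$ go through immediately, because the shift acts only in the field argument and leaves the $f$-dependence and the spacetime localisation untouched (a neighbourhood test for $\supp$ is unaffected by replacing $\ph_1$ with $\ph_1+\psi$).

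The heart of the proof is then the elementary ``add and subtract the intermediate configuration'' computation. Inserting $\pm L(f)[\ph+\psi]$ gives
\[
\alpha_{\psi+\chi}(L)(f)[\ph]=L(f)[\ph+\psi+\chi]-L(f)[\ph]=\bigl(L(f)[\ph+\psi+\chi]-L(f)[\ph+\psi]\bigr)+\bigl(L(f)[\ph+\psi]-L(f)[\ph]\bigr)\,.
\]
The second bracket is $(\alpha_\psi(L))(f)[\ph]$, while the first equals $(\alpha_\chi(L))(f)[\ph+\psi]=(\sigma_\psi^*\alpha_\chi(L))(f)[\ph]$ by $\sigma_\psi(\ph)=\ph+\psi$. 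Since $L,f,\ph$ are arbitrary this is precisely the claimed identity of maps, and the normalisation $\alpha_0=0$ (the cocycle value at the neutral element) follows at once from $\sigma_0=\id$.

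The main obstacle is not this telescoping computation but the well-definedness step, specifically covariance (axiom iii of Definition~\ref{Lagr}): since a fixed shift $\psi$ need not be invariant under the isometry group $\Gcal$, one finds $(\sigma_\psi^*L)(f)(g^*\ph)=L(f)(g^*\ph+\psi)$ whereas covariance would demand $L(f)(g^*\ph+g^*\psi)$, so $\sigma_\psi^*L$ (and likewise $\alpha_\psi(L)$) is covariant only when $g^*\psi=\psi$. I would therefore either read $\euL$ in this statement as the space cut out by additivity and support alone — covariance under $\Gcal$ playing no role in the translation/Schwinger--Dyson discussion that follows — or restrict the coefficient space to the $\Ecal_c$-orbit of the fixed $L$ inside the additive functionals, on which the cocycle relation is unaffected. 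Either way the algebraic content is exactly the one-line identity above.
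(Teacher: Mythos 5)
Your proof is correct and is essentially the paper's own argument: the identity $\alpha_{\psi+\chi}=\sigma_\psi^*\alpha_\chi+\alpha_\psi$ is obtained by the same insertion of $\pm L(f)[\ph+\psi]$ and identification of the two telescoping brackets, with the symmetric version following by exchanging $\psi$ and $\chi$. Your additional remark that $\sigma_\psi^*L$ generically violates the covariance axiom of Definition~\ref{Lagr} is a fair point that the paper passes over in silence, but it does not change the algebraic content of the proof.
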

\begin{proof}
We have
\begin{multline*}
\alpha_{\psi+\chi}(L)[\ph]\doteq L[\sigma_{\psi+\chi}(\ph)]-L[\ph]=L[\ph+\psi+\chi]-L[\ph+\psi]+L[\ph+\psi]-L[\ph]\\=\alpha_{\chi}(\sigma_\psi^*L)[\ph]+\alpha_{\psi}(L)[\ph]=(\sigma_\psi^*\alpha_{\chi})(L)[\ph]+\alpha_{\psi}(L)[\ph]\,,
\end{multline*}
where in the last equation $\sigma^*$ is a map form $\Ecal$ to automorphisms of $\euL$ given by
\[
(\sigma_\psi^*(\beta))(L)\doteq \beta(\sigma_\psi^*L)\,,
\]
where $\beta\in\Aut(\euL)$. In a shorthand notation we have:
\[
\alpha_{\psi+\chi}=\sigma^*_\psi\alpha_\chi+\alpha_\psi\,,
\]
which is indeed the cocycle condition. The same argument applies if one switches $\chi$ with $\psi$, i.e.
\[
\alpha_{\psi+\chi}=\sigma^*_\chi\alpha_\psi+\alpha_\chi\,.
\]
\end{proof}
I can now re-express definition \ref{df:delta:L} as
\[
\delta L(\psi)\doteq \alpha_\psi L(f)\,, 
\]
where $\psi\in\Dcal$ and $f\equiv 1$ on $\supp \psi$. 
Hence the co-cycle $\alpha$ encodes the classical dynamics.

 The condition \eqref{eq:SD} can be re-written as
\[
\Scal(F)\Scal(\alpha_\psi L(f))=\Scal(\sigma_\psi^*F+\alpha_\psi L(f))\,.
\]
where $\psi\in\Dcal$ and $f\equiv 1$ on $\supp \psi$. Let
\[
\beta_\psi(F):= \delta L(\psi) + \sigma^*_\psi F
\]
One can define maps $\hat{\alpha}^{r/\ell}:\Ecal_c\rightarrow \Aut(\fA)$ by fixing their action on the generators, namely:
\be
\hat{\alpha}^r_\psi(\Scal(F))\doteq \Scal(\beta_\psi(F)) \Scal( \alpha_\psi L(f) )^{-1}\,\quad \hat{\alpha}^{\ell}_\psi(\Scal(F))\doteq \Scal( \alpha_\psi L(f) )^{-1}\Scal(\beta_\psi(F))\,.
\ee
To simplify the notation, I will write $\hat{\alpha}^r_\psi$ simply as $\hat{\alpha}_\psi$ and only use the superscript $r$ when distinction with $\hat{\alpha}^{\ell}_\psi$ has to be made.
\begin{prop}
	$\hat{\alpha}$ defines an action of $\Ecal_c$ on $\fA$, i.e. it is a group homomorphism from $\Ecal_c$ to an abelian subgroup of $\Aut(\fA)$:
	\[
	\hat{\alpha}_{\psi+\chi}=\hat{\alpha}_{\psi}\circ \hat{\alpha}_{\chi}=\hat{\alpha}_{\chi}\circ \hat{\alpha}_{\psi}\,.
	\]
\end{prop}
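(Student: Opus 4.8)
The plan is to verify the composition law directly on the generators $\Scal(F)$, $F\in\Fcal_\loc$, of $\fA$: since $\hat\alpha_\psi$ and $\hat\alpha_\chi$ are algebra automorphisms and such a map is determined by its values on generators, it suffices to compare $\hat\alpha_\psi(\hat\alpha_\chi(\Scal(F)))$ with $\hat\alpha_{\psi+\chi}(\Scal(F))$. Throughout I would fix a single cutoff $f\in\Dcal$ with $f\equiv 1$ on $\supp\psi\cup\supp\chi$ and use it in every occurrence of $\alpha_\bullet L(f)$; this is legitimate because, by the remark following Definition~\ref{df:delta:L}, the functional $\alpha_\psi L(f)=\delta L(\psi)$ does not depend on the choice of $f$ as long as $f\equiv 1$ on $\supp\psi$. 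With this convention I abbreviate $\alpha_\psi L\equiv\alpha_\psi L(f)$, noting that each such expression is again a local, compactly supported functional, hence a legitimate argument of $\Scal$.

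The two structural inputs are: first, that translations compose, $\sigma^*_\psi\circ\sigma^*_\chi=\sigma^*_{\psi+\chi}$ on functionals (so $\sigma^*_\psi\sigma^*_\chi F=\sigma^*_{\psi+\chi}F$); and second, the cocycle identity of the previous proposition in its pulled-back form, $\sigma^*_\psi(\alpha_\chi L)=\alpha_{\psi+\chi}L-\alpha_\psi L$. I would first record these, checking in particular that pulling back the \emph{functional} $\alpha_\chi L$ by $\sigma_\psi$ agrees with the action $\sigma^*_\psi\alpha_\chi$ on the \emph{endomorphism} $\alpha_\chi$ appearing in the cocycle proposition (both equal $\ph\mapsto L[\ph+\psi+\chi]-L[\ph+\psi]$).

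Next I would unwind $\hat\alpha_\psi(\hat\alpha_\chi(\Scal(F)))$. Writing $\hat\alpha_\chi(\Scal(F))=\Scal(G)\Scal(H)^{-1}$ with $G=\alpha_\chi L+\sigma^*_\chi F$ and $H=\alpha_\chi L$, and using that $\hat\alpha_\psi$ is an automorphism, I get $\hat\alpha_\psi(\Scal(G))\,\hat\alpha_\psi(\Scal(H))^{-1}$. Applying the definition of $\hat\alpha_\psi$ to each generator and simplifying the arguments of $\Scal$ with the two identities above yields
\[
\hat\alpha_\psi(\Scal(G))=\Scal(\alpha_{\psi+\chi}L+\sigma^*_{\psi+\chi}F)\,\Scal(\alpha_\psi L)^{-1},\qquad \hat\alpha_\psi(\Scal(H))=\Scal(\alpha_{\psi+\chi}L)\,\Scal(\alpha_\psi L)^{-1},
\]
because $\alpha_\psi L+\sigma^*_\psi G=\alpha_{\psi+\chi}L+\sigma^*_{\psi+\chi}F$ and $\alpha_\psi L+\sigma^*_\psi H=\alpha_{\psi+\chi}L$. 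The middle factors $\Scal(\alpha_\psi L)^{-1}\Scal(\alpha_\psi L)$ then cancel in the group algebra, leaving exactly $\Scal(\alpha_{\psi+\chi}L+\sigma^*_{\psi+\chi}F)\Scal(\alpha_{\psi+\chi}L)^{-1}=\hat\alpha_{\psi+\chi}(\Scal(F))$, which is the homomorphism property $\hat\alpha_\psi\circ\hat\alpha_\chi=\hat\alpha_{\psi+\chi}$. Commutativity is then immediate: repeating the computation with $\psi$ and $\chi$ interchanged uses the second form of the cocycle identity, $\alpha_{\chi+\psi}=\sigma^*_\chi\alpha_\psi+\alpha_\chi$, and gives $\hat\alpha_\chi\circ\hat\alpha_\psi=\hat\alpha_{\chi+\psi}$; since $\Ecal_c$ is abelian, $\psi+\chi=\chi+\psi$ and both composites coincide, so the image is an abelian subgroup of $\Aut(\fA)$.

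The computation is purely formal once the two identities are in place; the only genuine subtlety — the step I would treat most carefully — is the bookkeeping of cutoffs and supports, ensuring that a single $f$ works simultaneously in all terms and that the cocycle identity is applied in its correct functional form. I would also note (or cite from the preceding definition) that each $\hat\alpha_\psi$ is a well-defined automorphism of $\fA$, i.e.\ that the generator-level formula is compatible with the Hammerstein relation \textbf{S2} and with $\Scal(0)=\1$; this well-definedness, rather than the composition law, is where the structure of the relations defining $\fA$ actually enters, and it should be secured before the above manipulations on the quotient algebra are carried out.
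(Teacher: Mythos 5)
Your proposal is correct and follows essentially the same route as the paper: both verify the composition law on the generators $\Scal(F)$, cancel the intermediate factors $\Scal(\alpha_\psi L)^{-1}\Scal(\alpha_\psi L)$ in the group algebra, and invoke the cocycle identity $\alpha_{\psi+\chi}=\sigma^*_\psi\alpha_\chi+\alpha_\psi$ (and its $\psi\leftrightarrow\chi$ counterpart) together with a single cutoff $f\equiv 1$ on $\supp\psi\cup\supp\chi$. Your closing remark that the well-definedness of $\hat\alpha_\psi$ on the quotient by \textbf{S1}--\textbf{S2} is where the real content lies is a fair observation that the paper leaves implicit.
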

\begin{proof}
We have
\begin{multline*}
\hat{\alpha}_{\psi+\chi}(\Scal(F))=\Scal(\alpha_{\psi+\chi} L(f) + \sigma^*_{\psi+\chi} F) \Scal( \alpha_{\psi+\chi} L(f) )^{-1}=\\
\Scal( (\sigma^*_\psi\alpha_\chi+\alpha_\psi)L(f) + \sigma^*_{\psi+\chi} F) \Scal( (\sigma^*_\psi\alpha_\chi+\alpha_\psi)L(f))^{-1}=\\
\Scal( (\sigma^*_\chi\alpha_\psi+\alpha_\chi)L(f) + \sigma^*_{\psi+\psi} F) \Scal( (\sigma^*_\chi\alpha_\psi+\alpha_\chi)L(f))^{-1}\,,
\end{multline*}
where $f\equiv 1$ on the support of $\psi+\chi$.
On the other hand:
\[
\hat{\alpha}_{\psi}\circ \hat{\alpha}_{\chi}(\Scal(F))=\Scal(\alpha_\psi L(f')+\sigma^*_\psi \alpha_\chi L(f'') + \sigma^*_{\psi+\chi} F  )
\Scal(\alpha_\psi L(f')+\sigma^*_\psi(\alpha_\chi L(f'')) )^{-1}\,,
\]
where $f'\equiv 1$ on $\supp\psi$ and $f''\equiv 1$ on $\supp \chi$. Hence, taking $f$ such that $f\equiv 1$ on $\supp \chi\cup \supp \psi\supset \supp(\chi+\psi)$, we obtain:
\[
\hat{\alpha}_{\psi+\chi}=\hat{\alpha}_{\psi}\circ \hat{\alpha}_{\chi}\,.
\]
A similar argument works for $\hat{\alpha}_{\chi}\circ \hat{\alpha}_{\psi}$, hence
\[
\hat{\alpha}_{\psi+\chi}=\hat{\alpha}_{\psi}\circ \hat{\alpha}_{\chi}=\hat{\alpha}_{\chi}\circ \hat{\alpha}_{\psi}\,.
\]
\end{proof}
Using the action $\hat{\alpha}$, one can express \eqref{eq:SD} as 
\be\label{SDnew}
\hat{\alpha}^r_\psi(\Scal(F))=\Scal(F)=\hat{\alpha}^{\ell}_\psi(\Scal(F))\,,\quad \forall \psi\in \Dcal\,,
\ee
so imposing \eqref{eq:SD} amounts to quotienting $\fA$ by the action of $\Ecal_c$ (taking the space of the co-invariants), i.e. implementing \textit{translational symmetry}. This makes sense, as  \eqref{eq:SD} is the finite version of the Schwinger-Dyson equation, which, formally, is the consequence of the \textit{translation invariance} of the path integral.
\subsection{Relation to the BV perspective}

Let's consider a theory with quadratic action $L_0$. Note that, in the absence of local symmetries other than the translation symmetry,
 the infinitesimal version of $\alpha$ corresponds to the action of the classical BV operator $s_0$ and the infinitesimal version of $\beta$ is the interacting classical BV operator $s$, as demonstrated below.

In our explicit model with star product and time-ordered product, we have:
\begin{multline*}
\frac{d}{dt}\hat{\alpha}_{t\psi}(\Scal(F))\big|_{t=0}=\frac{d}{dt}(\Scal(\beta_\psi(F))\Scal(\alpha_\psi L(f))^{-1})\big|_{t=0}\\=-\tfrac{i}{\hbar}\Scal(F)\star \left<\psi,P\ph\right>+\frac{d}{dt}\Scal(\beta_\psi(F))\big|_{t=0}=\tfrac{i}{\hbar}\left( -\Scal(F)\star s_0(X_\psi)+\Scal(F)\T \frac{d}{dt}(\beta_{t\psi}F)\big|_{t=0} \right)\,.
\end{multline*}
where $X_\psi$ is a vector field defined by $X_\psi=\int \psi(x)\frac{\delta}{\delta\ph(x)}$ and 
\[
\left<\psi,P\ph\right>\doteq \frac{d}{dt} \delta L_0(t\psi)[\ph]\Big|_{t=0}
\]
is the equation of motion term $P\ph$ smeared with $\psi$. Assuming $F$ in the kernel of $s_0$, we obtain
\[
 \Scal(F)\star s_0(X_\psi)=s_0(\Scal(F)\star X_\psi)\,,
\]
The right-hand side can also be written as $s_0(\Scal(F)\T X_\psi)$, since $X_\psi$ does not depend on fields.

The map $\beta$ is the finite version of the classical interacting BV operator $s$, in the sense that
\[
\frac{d}{dt}(\beta_{t\psi}F)\big|_{t=0}= \left<\psi,P\ph\right>+ \partial_{X_\psi}F=s(X_\psi)\,,
\]
where $s$ is the classical BV operator for the theory with the interaction $S_0+F$.  Using the antibracket notation, we can write the above formula also as $\{ X_\psi,S_0+F\}$.

Putting all these together, we obsere that the infinitesimal version of equation \eqref{SDnew} is
\[
\frac{d}{dt}\hat{\alpha}_{t\psi}(\Scal(F))\big|_{t=0}=-s_0(\Scal(F)\T X_\psi)+\Scal(F)\T s( X_\psi)=0\,.
\]
Comparing this with \eqref{MWI:inf}, the infinitesimal version of {\mwi}, we notice that in this case the term corresponding to the BV Laplacian $\Lap$ vanishes, since $X_\psi$ does not depend on fields. We can therefore write the infinitesimal version of $\hat{\alpha}$ in a conceptually more appropriate form:
\[
\frac{d}{dt}\hat{\alpha}_\psi(\Scal(F))\big|_{t=0}=\Scal(F)\T \hat{s}( X_\psi)-s_0(\Scal(F)\T X_\psi)\,.
\]
\begin{rem}
From the conceptual viewpoint, it seems that the Schwinger-Dyson equation relates the classical BV operator to the quantum one, so it relates two cohomology theories. It is not, by itself, a relation that corresponds to computing any of these cohomologies on their own. Compare this with the infinitesimal version {\mwi} \eqref{MWI:inf}, which plays the same role.
\end{rem}
\subsection{Action by diffeomorphisms}
The constructios of the previous sections should generalize from translations to arbitrary compactly supported diffeomorphisms of $\Ecal$ and we restrict ourselves to the diffeomorphisms that arise as exponentiated local, compactly supported vector fields $X\in\Vcal$. I denote this space by $\Diff_{\loc}$
Firstly, we extend $\alpha$ to $$\alpha:\Diff_{\loc}(\Ecal)\rightarrow \Aut(\euL)$$ by setting
\[
\alpha_g(L)(f)[\ph]\doteq L(f)[g(\ph)]-L(f)[\ph]\,.
\]
Similarly, define 
\[
\beta_g(F)\equiv \alpha_g L(f) + \sigma^*_g F\,,
\]
where $f\equiv 1$ on $\supp g$ and $\sigma^*_g F(\ph)\doteq F(g(\ph))$. More generally, we can consider the ``anomalous'' version of the action $\beta$:
\be\label{AWI:beta}
\beta_g(F)\doteq \alpha_g L(f) + \sigma^*_g F+\Acal_F(g)\,,
\ee
where again  $f\equiv 1$  and $\Acal_F(X)$ is the ``finite version'' of the  renormalized BV Laplacian \eqref{BVLap}, aka ``the anomaly term,'' meaning that
\[
\frac{d}{dt} \Acal_F(e^{tX})\big|_{t=0}=\Lap_F(X)
\]
In the path integral language, $\Acal_F(g)$ should be the logarithm of the Jacobian of the configuration space transformation $g\in G$. If no renormalization was needed, it would not depend on $F$. The interpretation of $\Acal_F(X)$ as the log Jacobian is consistent with the infinitesimal formulation, since the derivative of the Jacobian of $g$ gives the divergence of the vector filed $X$, generating $g$. 

In perturbation theory, the Master Ward Identity is the generalization of the Schwinger-Dyson equation, where translations are replaced by general compactly supported diffeomorphisms. Similarly, the \textit{unitary Master Ward Identity} should be an appropriate generalization of the \textit{unitary} Schwinger-Dyson equation \eqref{SDnew}. Note, however, that \eqref{SDnew} cannot hold for general $g$ (more general than the translations), if we keep the same definition of $\hat{\alpha}^{r/\ell}$. This is because, $\Scal(\alpha_g L(f))$ is no longer central, so multiplying from the right cannot be the same as multiplying from the left in \eqref{SDnew}.

A proposal for the on-shell formulation of the unitary Master Ward Identity will be presented in the upcoming paper by Brunetti, D\"utsch, Fredenhagen and myself \cite{Future}. In that paper we will also formulate the unitary quantum version of Noether's theorem. 

As for the off-shell version, analogous to the BV formalism, the following geometrical perspective might give a hint as to what is the correct formulation.

\subsection{Geometrical picture}

\subsubsection{Vector fields}
Let $\Ecal^*$ be the space of smooth sections of the dual bundle and let $\Ecal^!\equiv \Ecal^*\otimes{\rm Dens}$ (tensoring with densities). We can identify vector fields with functions on $\bar{\Ecal}\equiv\Ecal\times\Ecal^!\subset T^*\Ecal$ that are linear in the second argument. We will use the notation $X=\int X(x)\frac{\delta}{\delta\ph(x)}$ and identify $q$ with $\frac{\delta}{\delta \ph}$ so that 
\[
X(\ph,q)=\int  X(x)[\ph]q(x)\,,
\]
where $q\in\Ecal^!$.


\subsubsection{Products}
Let $\bar{\Fcal}$ denote the space of smooth functionals on $\bar{\Ecal}$ and let $\bar\Fcal_\loc$ be it's subspace consisting of the local ones. 
The following products on $\bar{\Fcal}$ will be used to encode how diffeomorphisms $e^X$ are acting on functions on the base:
\begin{align*}
	F\astunder G&=m\circ e^{-i\Lap_\otimes}(F\otimes G)\,,\\
	F\,\astover \,G&=m\circ e^{i\Lap_\otimes^T}(F\otimes G)\,,
\end{align*}
where 
\[
\Lap_\otimes\doteq \int \frac{\delta}{\delta q(x)}\otimes  \frac{\delta}{\delta \ph(x)}\,.
\]
The two products are related by complex conjugation in the sense that
\[
\overline{F\astunder G}=\bar{G}{\astover}\bar{F}\,.
\]
A useful identity involves the Jacobian of the diffeomorphism $ e^{X}$. Denote
\[
\Lap\doteq \int \frac{\delta^2}{\delta q(x)\delta \ph(x)}\,.
\]
Using this notation, $\Lap(X)$ is the divergence of the vector field $X\in\euV$. 	The exponential of $\Lap$ also allows one to relate the products:
\[
		e^{i\Lap} (X\astunder Y)=(e^{i\Lap}X)\astover (e^{i\Lap}Y)
\]
Hence
\[e^{-i\Lap}\circ \exp_{\astover}(i X)=\exp_{\astunder}(e^{-i\Lap}(i X))=e_{\astunder}^{iX+\Lap X}\,.
\]
We also have:
\[		e_{\astunder}^{iX}{\astunder}F{\astunder}e_{\astunder}^{-iX}= \sigma_g^*F\,,
\]
as well as:
	\[
	e_{\astunder}^{iX+\Lap(X)}\astunder F\astunder e_{\astunder}^{-iX}=\Jac( g)\, \sigma_g^*F \,,
	\]
where $g=e^X$ is the diffeomorphism given by exponentiating $X$. These equations allow one to encode the action of symmetries on functionals using the above products on $\overline{\Fcal}$.

Let $L_0'$ and $L_0''$ denote the first and second derivatives of a quadratic Lagrangian $L_0$, in the sense that
\[
\left<L_0',\psi\right> =\frac{d}{dt} \delta L_0(t\psi)\big|_{t=0}\,,
\]
for $\psi\in \Ecal_c$ and
\[
\left<L_0'',\psi_1\otimes \psi_2\right> =\frac{d^2}{dtds} \delta L(t\psi_1+s\psi_2)\big|_{t=s=0}\,,
\]
where $\psi_1, \psi_2 \in \Ecal_c$, but due to locality, one of the arguments could be non-compactly supported, so $L''$ induces a differential operator $P:\Ecal\rightarrow\Ecal$.

Let $g=e^X$, one can then write
\[
\alpha_gL=\left<L',\frac{\delta}{\delta q} e_{\astunder}^{X}\big|_{q=0}\right>+\frac{1}{2} \left<L'',\frac{\delta^2}{\delta q^2} e_{\astunder}^{X}\big|_{q=0}\right>=\left(L+\left<L',\frac{\delta}{\delta q}\right>+\frac{1}{2} \left<L'',\frac{\delta^2}{\delta q^2} \right>\right)e_{\astunder}^{X}\big|_{q=0}
\]
 Using the short-hand $\frac{\delta}{\delta q}\equiv \delta_q$ and defining
\[
\delta L(\delta_q)\doteq \left<L',\frac{\delta}{\delta q}\right>+\frac{1}{2} \left<L'',\frac{\delta^2}{\delta q^2} \right>\,,
\]
one obtains for $f\equiv 1$ on $\supp X$:
\[
\alpha_gL(f)= \delta L(\delta_q) e_{\astunder}^{X}\big|_{q=0}
\]
The exponential of $\delta L(\delta_q)$ can be thought of as the finite version of $\delta_0$ and hence one can define a map
\[
\tilde{s}_0=\TT^{-1}\circ 	e^{-i\delta L(i\delta_q)}\circ \TT\,,
\]
in analogy to equation \eqref{quantumBV} in the BV formalism. Using the Baker–Campbell–Hausdorff formula one finds that 
\[
e^{-i\delta L(i\delta_q)}\circ \TT=	\TT\circ e^{i\delta L(-i\delta_q)}\circ e^{-i\Lap}
\]
which leads to the following heuristic proposal for the finite off-shell anomalous MWI:
\be\label{MWI2}
e^{-i\delta L(i\delta_q)}\Tcal(e_{{\astover}}^{iX}\astover e^{iF})=\Tcal\left(\Jac(e^X)e^{i(\sigma_g^*F+\alpha_gL (X))}\, e_{\astunder}^{iX}\right)=\Tcal\left(e^{i(\sigma_g^*F+\alpha_gL(X)-i\Acal_F(g))}\, e_{\astunder}^{iX}\right)\,,
\ee
which holds identically for regular functionals. Barring the IR problems we also have
\[
e^{iL}{\astover}\Tcal(e_{{\astover}}^{iX}e^{iF}){\astover}	e^{-iL}=e^{-i\delta L(i\delta_q)}\circ \TT (e_{{\astover}}^{iX}e^{iF})
\]

One can easily imagine that exponentials appearing in the above equation could be identified with some generalization of $S$-matrices. However, as it stands, these are not unitary with respect to the star product present in this setting. A similar situation appears if one takes time-ordered exponentials of regular, but not local, functionals. Hence, it is likely that unitarity can be restored as one reverts to local functionals, but then renormalization has to be performed. This issue and other questions posed in this review will be further addressed in my future works.

\appendix
\section{Proof of theorem \ref{MollerBV}}
The proof relies on two lemmas.
 \begin{lemma}
 	\[
 	r_{V}^{-1}(\{X,S_{00}\})=\{r_{V}^{-1}X,S_{00}+V\}- \int \frac{\delta_r X}{\delta\ph^\gamma(y)}({\mathtt r}^{-1}_{ V}(\ph))\Delta^{\rm R}(y,z)^{\gamma\beta}\frac{\delta_l}{\delta\ph^\beta(z)}\left(\frac{1}{2}\{S_{00}+V,S_{00}+V\}\right)
 	\]
 \end{lemma}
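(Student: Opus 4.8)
\emph{Strategy and Step 1 (the left-hand side).} The plan is to peel off the antifield-trivial structure of the M{\o}ller map, reduce the claim to a single ``coefficient'' identity, and verify it using the defining property $P\circ\Delta^{\rm R}=\1=\Delta^{\rm R}\circ P$ of the retarded Green function together with the graded symmetry of the antibracket. Since $S_{00}$ is antifield-independent we have $\delta_l S_{00}/\delta\ph^\ddagger_\alpha=0$, so the antibracket collapses to $\{X,S_{00}\}=-\langle \delta_r X/\delta\ph^\ddagger_\alpha,\,(P\ph)_\alpha\rangle$. Applying $r_V^{-1}$ substitutes $\ph\mapsto{\mathtt r}^{-1}_{V}(\ph)$ in the field slots while leaving the antifields fixed (the M{\o}ller map acts trivially on antifields). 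Acting with $P$ on ${\mathtt r}^{-1}_V(\ph)^\alpha=\ph^\alpha+(\Delta^{\rm R})^{\alpha\beta}\,\delta_l V/\delta\ph^\beta$ and using $P\circ\Delta^{\rm R}=\1$ gives $\big(P\,{\mathtt r}^{-1}_V(\ph)\big)_\alpha=(P\ph)_\alpha+\delta_l V/\delta\ph^\alpha=\delta_l(S_{00}+V)/\delta\ph^\alpha$, so that $r_V^{-1}(\{X,S_{00}\})=-\big\langle \tfrac{\delta_r X}{\delta\ph^\ddagger_\alpha}({\mathtt r}^{-1}_V(\ph)),\,\tfrac{\delta_l(S_{00}+V)}{\delta\ph^\alpha}\big\rangle$.

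\emph{Step 2 (reduction to one identity).} I expand $\{r_V^{-1}X,\,S_{00}+V\}$ by the definition of the antibracket. Because ${\mathtt r}^{-1}_V$ is trivial on antifields, $\delta_r(r_V^{-1}X)/\delta\ph^\ddagger_\alpha=(\delta_r X/\delta\ph^\ddagger_\alpha)\circ{\mathtt r}^{-1}_V$, so the ``antifield--field'' term of the antibracket is exactly the expression produced in Step 1. Hence $\{r_V^{-1}X,S_{00}+V\}=r_V^{-1}(\{X,S_{00}\})+A$ with $A\doteq\langle \delta_r(r_V^{-1}X)/\delta\ph^\alpha,\,\delta_l V/\delta\ph^\ddagger_\alpha\rangle$, and the Lemma becomes equivalent to the single equation $A=\mathrm{Corr}$, where $\mathrm{Corr}$ is the $\Delta^{\rm R}$-weighted term on the right of the statement.

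\emph{Step 3 (factor out the chain rule and match).} By the chain rule $\delta_r(r_V^{-1}X)/\delta\ph^\alpha(x)=\int \tfrac{\delta_r X}{\delta\ph^\gamma(y)}({\mathtt r}^{-1}_V(\ph))\,J^\gamma_{\ \alpha}(y,x)\,dy$, with Jacobian $J^\gamma_{\ \alpha}(y,x)=\delta^\gamma_\alpha\delta(y-x)+\int(\Delta^{\rm R})^{\gamma\delta}(y,w)\,\tfrac{\delta_r}{\delta\ph^\alpha(x)}\tfrac{\delta_l V}{\delta\ph^\delta(w)}\,dw$. Since $\mathrm{Corr}$ carries the same overall factor $\tfrac{\delta_r X}{\delta\ph^\gamma(y)}({\mathtt r}^{-1}_V(\ph))$, it suffices to prove, as a distributional identity in $y$ for every configuration, that
\[
\int J^\gamma_{\ \alpha}(y,x)\,\frac{\delta_l V}{\delta\ph^\ddagger_\alpha(x)}\,dx \;=\; \int (\Delta^{\rm R})^{\gamma\beta}(y,z)\,\frac{\delta_l}{\delta\ph^\beta(z)}\Big(\tfrac12\{S_{00}+V,S_{00}+V\}\Big)\,dz .
\]
Expanding the right-hand side, the piece in which $\delta_l/\delta\ph^\beta$ hits the $(P\ph)$-factor of the master equation produces, after contraction with $\Delta^{\rm R}$ and use of $\Delta^{\rm R}\circ P=\1$, precisely the $\delta^\gamma_\alpha\delta(y-x)$-contribution $\delta_l V/\delta\ph^\ddagger_\gamma(y)$ of $J$; the piece in which it hits a field-derivative of $V$ reproduces the $\Delta^{\rm R}V''$-contribution of $J$, the two agreeing by symmetry of the mixed left/right second derivatives of $V$.

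\emph{Main obstacle.} The substance of the computation is the graded sign bookkeeping: one must track the Grassmann parities of ghosts and antifields when converting between $\delta_l$ and $\delta_r$, and verify that the two halves of $\{S_{00}+V,S_{00}+V\}$ conspire so that the spurious cross terms of the form $\Delta^{\rm R}(P\ph)\,\delta^2 V/\delta\ph\,\delta\ph^\ddagger$ cancel, as they have no counterpart in $A$. A second delicate point is the distributional integration by parts moving the differential operator $P$ off $\delta(y-z)$ onto $\Delta^{\rm R}$; this uses the formal self-adjointness of $P$ (from $S_{00}$ being quadratic, together with the consistency conditions of Lemma~\ref{gauge:inv:Delta}), which is what guarantees that $\Delta^{\rm R}$ inverts $P$ from both sides.
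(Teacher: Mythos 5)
There is a genuine gap in Step 2. You assert that, because ${\mathtt r}^{-1}_V$ acts trivially on antifields,
\[
\frac{\delta_r}{\delta\ph^\ddagger_\alpha}\bigl(r_V^{-1}X\bigr)=\frac{\delta_r X}{\delta\ph^\ddagger_\alpha}\circ{\mathtt r}^{-1}_V\,.
\]
This fails whenever $V$ depends on antifields --- which is exactly the situation of interest here: the correction term $\tfrac12\{S_{00}+V,S_{00}+V\}$ on the right-hand side of the lemma is nonzero only because $V$ carries antifield dependence (for antifield-independent $V$ and $S_{00}$ the antibracket vanishes identically). The field component ${\mathtt r}^{-1}_V(\ph)^\gamma=\ph^\gamma+(\Delta^{\rm R})^{\gamma\beta}\,\delta_l V/\delta\ph^\beta(\ph,\ph^\ddagger)$ inherits that antifield dependence, so differentiating $X\circ{\mathtt r}^{-1}_V$ with respect to $\ph^\ddagger_\alpha$ produces an extra Jacobian contribution
\[
\int\frac{\delta_r X}{\delta\ph^\gamma(y)}({\mathtt r}^{-1}_{V}(\ph))\,\Delta^{\rm R}(y,z)^{\gamma\beta}\,\frac{\delta_l}{\delta\ph^\beta(z)}\frac{\delta_r V}{\delta\ph^\ddagger_\alpha(x)}(\ph)\,,
\]
which is precisely the chain-rule identity the paper records and then solves for $\tfrac{\delta_r X}{\delta\ph^\ddagger_\alpha}({\mathtt r}^{-1}_V(\ph))$. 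Dropping it falsifies your decomposition $\{r_V^{-1}X,S_{00}+V\}=r_V^{-1}(\{X,S_{00}\})+A$.

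The symptom surfaces exactly where you flag the ``main obstacle'': the cross terms of the form $\Delta^{\rm R}\,(P\ph)\,\delta^2V/\delta\ph\,\delta\ph^\ddagger$, produced when $\delta_l/\delta\ph^\beta$ hits the $V$-factor of $\{S_{00},V\}$, do \emph{not} cancel between the two halves of the master equation --- they are matched by the Jacobian term you discarded, contracted with $\delta_l S_{00}/\delta\ph^\alpha=P\ph$. The paper's proof keeps that term, regroups it with a Leibniz-rule manipulation (together with a second, analogous chain-rule identity for $\tfrac{\delta_r X}{\delta\ph^\alpha}({\mathtt r}^{-1}_V(\ph))$, needed to produce the $\tfrac12\{V,V\}$ piece), and only then assembles $\tfrac12\{S_{00}+V,S_{00}+V\}$ under a single $\delta_l/\delta\ph^\beta(z)$. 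Your Steps 1 and 3 are otherwise in the spirit of the paper's computation, but without the corrected chain rule in Step 2 the single identity you reduce the lemma to is false term by term.
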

 \begin{proof}
By definition of $\{.,.\}$ and $r_V^{-1}$ we have:
 	\begin{align}
 	r_{V}^{-1}(\{X,S_{00}\})&= -\int \frac{\delta_r X}{\delta\ph^\ddagger_\alpha(x)}({\mathtt r}^{-1}_{ V}(\ph))\frac{\delta_l S_{00}}{\delta\ph^\al(x)}({\mathtt r}^{-1}_{V}(\ph))\nonumber
 	\\
 	&=-\int \frac{\delta_r X}{\delta\ph^\ddagger_\alpha(x)}({\mathtt r}^{-1}_{V}(\ph))\left(\frac{\delta_l S_{00}}{\delta\ph^\al(x)}(\ph)+\frac{\delta_l V}{\delta\ph^\al(x)}(\ph)\right)\,,\label{eq:star}
 	\end{align}
 	where we used the fact that $\frac{\delta_l S_{00}}{\delta\ph^\al(x)}(\ph)=P_{\alpha\beta}(x)(\ph^\beta(x))$,	so
 	\[
 	\frac{\delta_l S_{00}}{\delta\ph^\al(x)}({\mathtt r}^{-1}_{V}(\ph))(\ph)=\frac{\delta_l S_{00}}{\delta\ph^\al(x)}(\ph)+\frac{\delta_l V}{\delta\ph^\al(x)}(\ph)\,,
 	\]
 	as $P\circ \Delta^{\rm R}(x,y)=\delta(x-y)$. Next, we notice that applying the chain rule:
 	\be\label{eq:heart}
 	\frac{\delta_r}{\delta\ph_\alpha^\ddagger(x)}(r_V^{-1}(X))= \frac{\delta_r X}{\delta\ph^\ddagger_\alpha(x)}({\mathtt r}^{-1}_{ V}(\ph))+\int\frac{\delta_r X}{\delta\ph^\gamma(y)}({\mathtt r}^{-1}_{ V}(\ph))\Delta^{\rm R}(y,z)\frac{\delta_l}{\delta\ph^\beta(z)}\frac{\delta_r V}{\delta\ph^\ddagger_\alpha(x)}(\ph)\,.
 	\ee
 	We solve the above equation for $\frac{\delta_r X}{\delta\ph^\ddagger_\alpha(x)}({\mathtt r}^{-1}_{ V}(\ph))$ and insert it into the first term of \eqref{eq:star} to obtain
 	\begin{multline}\label{eq:twostar}
 	-\int \frac{\delta_r X}{\delta\ph^\ddagger_\alpha(x)}({\mathtt r}^{-1}_{V}(\ph))\frac{\delta_l S_{00}}{\delta\ph^\al(x)}(\ph)\\=\int 
 	\left(-\frac{\delta_r}{\delta\ph_\alpha^\ddagger(x)}(r_V^{-1}(X))+ \int\frac{\delta_r X}{\delta\ph^\gamma(y)}({\mathtt r}^{-1}_{ V}(\ph))\Delta^{\rm R}(y,z)\frac{\delta_l}{\delta\ph^\beta_g(z)}\frac{\delta_r V}{\delta\ph^\ddagger_\alpha(x)}(\ph)\right)
 	\frac{\delta_l S_{00}}{\delta\ph^\al(x)}(\ph)
 	\end{multline}
 	The first term in \eqref{eq:twostar} is what we eventually want to get on the right-hand side of the identity we are trying to prove and the second term can be rewritten using the Leibniz rule into:
 	\be\label{eq:dagger}
 	\int \frac{\delta_r X}{\delta\ph^\gamma(y)}({\mathtt r}^{-1}_{ V}(\ph))\Delta^{\rm R}(y,z)^{\gamma\beta}\left(\frac{\delta_l}{\delta\ph^\beta(z)}\left(\frac{\delta_r V}{\delta\ph^\ddagger_\alpha(x)}
 	\frac{\delta_l S_{00}}{\delta\ph^\al(x)}\right)-(-1)^{|\beta|(|\alpha|+1)}\frac{\delta_r V}{\delta\ph^\ddagger_\alpha(x)} \frac{\delta_l}{\delta\ph^\beta(z)}\frac{\delta_l S_{00}}{\delta\ph^\al(x)}\right)
 	\ee
 	Now we use the fact that $\frac{\delta_l S_{00}}{\delta\ph^\al(x)}=(-1)^{|\alpha|}\frac{\delta_r S_{00}}{\delta\ph^\al(x)}$ and that composing $\Delta^{\rm R}$ with $P$ gives identity. This allows to rewrite \eqref{eq:dagger} as
 	\be\label{eq:doubledagger}
 	\int \frac{\delta_r X}{\delta\ph^\gamma(y)}({\mathtt r}^{-1}_{ V}(\ph))\Delta^{\rm R}(y,z)^{\gamma\beta}\frac{\delta_l}{\delta\ph^\beta(z)}\left(\frac{\delta_r V}{\delta\ph^\ddagger_\alpha(x)}
 	\frac{\delta_l S_{00}}{\delta\ph^\al(x)}\right)- (-1)^{|\alpha|} \int \frac{\delta_r X}{\delta\ph^\alpha(x)}({\mathtt r}^{-1}_{ V}(\ph)) \frac{\delta_r V}{\delta\ph^\ddagger_\alpha(x)}
 	\ee
 	There are two terms in \eqref{eq:doubledagger}: the first one is in the form we want, but we need to work a bit more on the second one.
 	First, we notice that
 	\be\label{eq:threestar}
 	-(-1)^{|\alpha|} \int \frac{\delta_r X}{\delta\ph^\alpha(x)}({\mathtt r}^{-1}_{ V}(\ph)) \frac{\delta_r V}{\delta\ph^\ddagger_\alpha(x)}=\int \frac{\delta_r X}{\delta\ph^\alpha(x)}({\mathtt r}^{-1}_{ V}(\ph)) \frac{\delta_l V}{\delta\ph^\ddagger_\alpha(x)}
 	\ee
and using a calculation analogous to \eqref{eq:heart}, we also get
 	\[\frac{\delta_r X}{\delta\ph^\alpha(x)}({\mathtt r}^{-1}_{ V}(\ph))=
 	\frac{\delta_r}{\delta\ph^\alpha(x)}(r_V^{-1}(X))-\int\frac{\delta_r X}{\delta\ph^\gamma(y)}({\mathtt r}^{-1}_{ V}(\ph))\Delta^{\rm R}(y,z)\frac{\delta_l}{\delta\ph^\beta(z)}\frac{\delta_r V}{\delta\ph^\alpha(x)}(\ph)\,,
 	\]
 	which we then insert on the right-hand side of \eqref{eq:threestar}. Plugging the result back into \eqref{eq:doubledagger} and then into \eqref{eq:twostar}, we obtain the final expression:
 	\begin{align*}
 	r_{V}^{-1}(\{X,S_{00}\})
 	&=\{r_V^{-1}(X),S_{00}+V\}+ \int \frac{\delta_r X}{\delta\ph^\gamma(y)}({\mathtt r}^{-1}_{ V}(\ph))\Delta^{\rm R}(y,z)^{\gamma\beta}\frac{\delta_l}{\delta\ph^\beta(z)}\left(-\{S_{00},V\}-\frac{1}{2}\{V,V\}\right)\\
 	&=\{r_V^{-1}(X),S_{00}+V\}- \int \frac{\delta_r X}{\delta\ph^\gamma(y)}({\mathtt r}^{-1}_{ V}(\ph))\Delta^{\rm R}(y,z)^{\gamma\beta}\frac{\delta_l}{\delta\ph^\beta(z)}\left(\frac{1}{2}\{S_{00}+V,S_{00}+V\}\right)\,.
 	\end{align*}

 \end{proof}
 
 \begin{lemma}
 	\[
 	r_{V}^{-1}(\{X,\theta_0\})=\{r_{V}^{-1}X,
 	\theta_0\}- \int \frac{\delta_r X}{\delta\ph^\gamma(y)}({\mathtt r}^{-1}_{ V}(\ph))\Delta^{\rm R}(y,z)^{\gamma\beta}\frac{\delta_l}{\delta\ph^\beta(z)}\left(\{V,\theta_0\}_g\right)
 	\]
 \end{lemma}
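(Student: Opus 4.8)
The plan is to follow the proof of the preceding lemma in structure, with $\theta_0$ replacing $S_{00}$, the crucial new feature being that $\theta_0$ depends on \emph{both} fields and antifields, so that \emph{both} terms of
\[
\{X,\theta_0\}=\sum_\al\left<\frac{\delta_r X}{\delta\ph^\al},\frac{\delta_l\theta_0}{\delta\ph^\ddagger_\al}\right>-\left<\frac{\delta_r X}{\delta\ph^\ddagger_\al},\frac{\delta_l\theta_0}{\delta\ph^\al}\right>
\]
survive. First I would apply $r_V^{-1}$, i.e. evaluate every factor at ${\mathtt r}^{-1}_V(\ph)$, and record how the two derivatives of $\theta_0$ transform. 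Since $\theta_0$ is bilinear with $\frac{\delta_r\delta_l\theta_0}{\delta\ph^\sigma\delta\ph^\ddagger_\al}=K^\al_{\ \sigma}\delta$, its antifield-derivative $\frac{\delta_l\theta_0}{\delta\ph^\al}$ is linear in the antifields alone and is therefore \emph{invariant} under ${\mathtt r}^{-1}_V$ (which fixes antifields), whereas its field-derivative picks up a shift,
\[
\frac{\delta_l\theta_0}{\delta\ph^\ddagger_\al(x)}({\mathtt r}^{-1}_V(\ph))=K^\al_{\ \sigma}(x)\Big(\ph^\sigma(x)+(\Delta^{\rm R})^{\sigma\beta}\tfrac{\delta_l V}{\delta\ph^\beta}(\ph)\Big)\,,
\]
coming from ${\mathtt r}^{-1}_V(\ph^\sigma)=\ph^\sigma+(\Delta^{\rm R})^{\sigma\beta}\frac{\delta_l V}{\delta\ph^\beta}$.

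Next I would convert the arguments $\frac{\delta_r X}{\delta\ph^\al}({\mathtt r}^{-1}_V(\ph))$ and $\frac{\delta_r X}{\delta\ph^\ddagger_\al}({\mathtt r}^{-1}_V(\ph))$ into derivatives of $r_V^{-1}(X)$, using the chain rule \eqref{eq:heart} for the antifield-derivative and its field-derivative counterpart (the identity displayed just after \eqref{eq:threestar}), solving each for the bare $X$-derivative and substituting. As in the first lemma, each substitution splits every contribution into a \emph{principal} part, in which a derivative of $r_V^{-1}(X)$ is paired directly with the matching derivative of $\theta_0$, and a \emph{remainder} part carrying the factor $\frac{\delta_r X}{\delta\ph^\gamma(y)}({\mathtt r}^{-1}_V(\ph))\Delta^{\rm R}(y,z)^{\gamma\beta}$. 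The principal parts coming from the two antibracket terms combine precisely into $\{r_V^{-1}(X),\theta_0\}$.

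It then remains to assemble the remainder parts, together with the field-shift piece $\frac{\delta_r X}{\delta\ph^\al}\,K^\al_{\ \sigma}(\Delta^{\rm R})^{\sigma\beta}\frac{\delta_l V}{\delta\ph^\beta}$ produced in the first step, into
\[
-\int \frac{\delta_r X}{\delta\ph^\gamma(y)}({\mathtt r}^{-1}_{ V}(\ph))\Delta^{\rm R}(y,z)^{\gamma\beta}\frac{\delta_l}{\delta\ph^\beta(z)}\left(\{V,\theta_0\}\right)\,.
\]
Here the key point, absent in the first lemma, is the index structure: in the field-shift piece $K$ sits between the $X$-derivative and $\Delta^{\rm R}$, whereas the target has $\Delta^{\rm R}$ contracted directly against $\frac{\delta_r X}{\delta\ph^\gamma}$. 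To commute $K$ across $\Delta^{\rm R}$ I would invoke the consistency conditions \eqref{const:cond} of Lemma \ref{gauge:inv:Delta}, $\sum_\sigma(-1)^{|\ph^\al|}K^\al_{\ \sigma}(x')\Delta^{\rm R}(x',x)^{\sigma\gamma}=-\sum_\sigma K^\gamma_{\ \sigma}(x)\Delta^{\rm R}(x',x)^{\al\sigma}$. After this transposition, the Leibniz expansion of $\frac{\delta_l}{\delta\ph^\beta(z)}\{V,\theta_0\}$ matches the pieces one by one: the transposed field-shift piece reproduces the term in which $\frac{\delta_l}{\delta\ph^\beta}$ hits $\frac{\delta_l\theta_0}{\delta\ph^\ddagger_\al}=K^\al_{\ \sigma}\ph^\sigma$ in the first half $\left<\frac{\delta_r V}{\delta\ph^\al},\frac{\delta_l\theta_0}{\delta\ph^\ddagger_\al}\right>$, while the chain-rule remainders reproduce the term hitting $\frac{\delta_r V}{\delta\ph^\al}$ and the whole of the second half $\left<\frac{\delta_r V}{\delta\ph^\ddagger_\al},\frac{\delta_l\theta_0}{\delta\ph^\al}\right>$.

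The hardest part will be the combined sign/index bookkeeping: because $\theta_0$ feeds both antibracket terms and carries nontrivial ghost and antifield degree, each chain-rule substitution, each Leibniz step, and the transposition of $K$ across $\Delta^{\rm R}$ generate Koszul signs such as $(-1)^{|\al|}$ and $(-1)^{|\beta|(|\al|+1)}$, and the reassembly into $\{r_V^{-1}(X),\theta_0\}$ plus the single $\Delta^{\rm R}$-weighted remainder closes up only if these --- together with the sign in \eqref{const:cond} --- are tracked consistently; one must also check that the ${\mathtt r}^{-1}_V$-invariance of $\frac{\delta_l\theta_0}{\delta\ph^\al}$ introduces no spurious $\Delta^{\rm R}$-terms. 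Once this is handled the computation is a careful but routine graded parallel of the first lemma, and adding the two lemmas --- using the free master equation $\{S_0,S_0\}=0$, which collapses $\tfrac12\{S_{00}+V,S_{00}+V\}+\{V,\theta_0\}$ to $\cme(S)$ --- yields Theorem \ref{MollerBV}.
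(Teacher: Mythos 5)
Your plan matches the paper's proof step for step: the same antibracket decomposition, the same observation that $\frac{\delta_l\theta_0}{\delta\ph^\al}$ is unchanged under ${\mathtt r}^{-1}_V$ while $\frac{\delta_l\theta_0}{\delta\ph^\ddagger_\al}$ acquires the $K\Delta^{\rm R}\,\delta V$ shift, the same chain-rule substitutions converting $X$-derivatives at ${\mathtt r}^{-1}_V(\ph)$ into derivatives of $r_V^{-1}(X)$ plus $\Delta^{\rm R}$-weighted remainders, and the same essential use of the consistency conditions \eqref{const:cond} to reconcile the $K$--$\Delta^{\rm R}$ index mismatch (the paper phrases this as a cancellation of the field-shift piece against the Leibniz correction term, which is the same computation as your transposition). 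The only quibble is terminological --- you call $\frac{\delta_l\theta_0}{\delta\ph^\al}$ the ``antifield-derivative'' and $\frac{\delta_l\theta_0}{\delta\ph^\ddagger_\al}$ the ``field-derivative'', the reverse of the usual naming --- but since you display the formulas explicitly this does not affect the argument.
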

 \begin{proof}
 	We have
 	\be\label{eq:star2}
 	r_{V}^{-1}(\{X,\theta_0\})=\int \frac{\delta_r X}{\delta\ph^\alpha(x)}({\mathtt r}^{-1}_{ V}(\ph))\frac{\delta_l \theta_0}{\delta\ph^\ddagger_\al(x)}({\mathtt r}^{-1}_{V}(\ph)) -\int \frac{\delta_r X}{\delta\ph^\ddagger_\alpha(x)}({\mathtt r}^{-1}_{ V}(\ph))\frac{\delta_l \theta_0}{\delta\ph^\al(x)}({\mathtt r}^{-1}_{V}(\ph))
 	\ee
 	Note that $\frac{\delta \theta_0}{\delta\ph^\al(x)}({\mathtt r}^{-1}_{V}(\ph))=\frac{\delta \theta_0}{\delta\ph^\al(x)}(\ph)$,  as $\theta_0$ is linear in fields and that
 	\[
 	\frac{\delta_l \theta_0}{\delta\ph^\ddagger_\al(x)}({\mathtt r}^{-1}_{V}(\ph))
 	=
 	K^\alpha_{\ \sigma}\left(\ph^\sigma(x)+  \int_M\Delta^{\mathrm{R}}(x,y)^{\sigma\beta}\, \frac{\delta_l V}{\delta\ph^\beta_g(y)}(\ph)\right)=K^\alpha_{\ \sigma} \ph^\sigma(x)+  {\int_MK^\alpha_{\ \sigma}(x)\Delta^{\mathrm{R}}(x,y)^{\sigma\beta}\, \frac{\delta_lV}{\delta\ph^\beta_g(y)}(\ph)}
 	\]
 	%
 	So the first term in \eqref{eq:star2} becomes:
 	\begin{multline}\label{eq:dagger2}
 	\int \frac{\delta_r X}{\delta\ph^\alpha(x)}({\mathtt r}^{-1}_{ V}(\ph))\frac{\delta_l \theta_0}{\delta\ph^\ddagger_\al(x)}({\mathtt r}^{-1}_{V}(\ph))\\=
 	\int \frac{\delta_r X}{\delta\ph^\alpha(x)}({\mathtt r}^{-1}_{ V}(\ph))K^\alpha_{\ \sigma} \ph^\sigma(x)+  {\int \frac{\delta_r X}{\delta\ph^\alpha(x)}({\mathtt r}^{-1}_{ V}(\ph)) K^\alpha_{\ \sigma}(x)\Delta^{\mathrm{R}}(x,y)^{\sigma\beta}\, \frac{\delta_lV}{\delta\ph^\beta_g(y)}(\ph)}\,.
 	\end{multline}
 	Now we use the fact that
 	\[
 	\frac{\delta_r X}{\delta\ph^\alpha(x)}({\mathtt r}^{-1}_{ V}(\ph))= 	
 	\frac{\delta_r}{\delta\ph^\alpha(x)}(r_V^{-1}(X))
 	-\int\frac{\delta_r X}{\delta\ph^\gamma(y)}({\mathtt r}^{-1}_{ V}(\ph))\Delta^{\rm R}(y,z)\frac{\delta_l}{\delta\ph^\beta(z)}\frac{\delta_r V}{\delta\ph^\alpha(x)}(\ph)
 	\]
 	to rewrite  the first term in \eqref{eq:dagger2} as
 	\begin{multline}\label{eq:doubledagger2}
 	\int\frac{\delta_r X}{\delta\ph^\alpha(x)}({\mathtt r}^{-1}_{ V}(\ph)) K^{\alpha}_{\ \sigma}\ph^\sigma(x)= 	\int\frac{\delta_r}{\delta\ph^\alpha(x)}(r_V^{-1}(X))K^{\alpha}_{\ \sigma}\ph^\sigma(x)\\
 	-\int \frac{\delta_r X}{\delta\ph^\gamma(y)}({\mathtt r}^{-1}_{ V}(\ph))\Delta^{\rm R}(y,z)^{\gamma\beta}\frac{\delta_l}{\delta\ph^\beta(z)}\frac{\delta_r V}{\delta\ph^\alpha(x)}(\ph)K^{\alpha}_{\ \sigma}\ph^\sigma(x)\,,
 	\end{multline}
 	The first term in \eqref{eq:doubledagger2} is what we want and we rewrite the second term, using the Leibniz rule, as:
 	\begin{multline}\label{eq:spades}
 	-\int\frac{\delta_r X}{\delta\ph^\gamma(y)}({\mathtt r}^{-1}_{ V}(\ph))\Delta^{\rm R}(y,z)^{\gamma\beta}\frac{\delta_l}{\delta\ph^\beta(z)}\left(\frac{\delta_r V}{\delta\ph^\alpha(x)}(\ph)\frac{\delta_l \theta_0}{\delta\ph^\ddagger_\al(x)}(\ph)\right)\\
 	+\int{\frac{\delta_r X}{\delta\ph^\gamma(y)}({\mathtt r}^{-1}_{ V}(\ph))\Delta^{\rm R}(y,z)^{\gamma\beta}\frac{\delta_r V}{\delta\ph^\alpha(x)}(\ph)\frac{\delta_l}{\delta\ph^\beta(z)}\frac{\delta_l \theta_0}{\delta\ph^\ddagger_\al(x)}(\ph)}
 	\end{multline}
 	Now we put \eqref{eq:spades} back into \eqref{eq:doubledagger2}, use the fact that $\frac{\delta_l}{\delta\ph^\beta(z)}\frac{\delta_l \theta_0}{\delta\ph^\ddagger_\al(x)}(\ph)=K^\alpha_{\ \beta}(x)\delta(x-z)$ and insert the resulting form of \eqref{eq:doubledagger2} into \eqref{eq:dagger2}, to obtain:
 	\begin{multline}\label{eq:clubs}
 	\int\frac{\delta_r X}{\delta\ph^\alpha(x)}({\mathtt r}^{-1}_{ V}(\ph))	\frac{\delta_l \theta_0}{\delta\ph^\ddagger_\al(x)}({\mathtt r}^{-1}_{V}(\ph))=\int \frac{\delta_r}{\delta\ph^\alpha(x)}(r_V^{-1}(X))K^{\alpha}_{\ \sigma}\ph^\sigma(x)\\
 	-\int \frac{\delta_r X}{\delta\ph^\gamma(y)}({\mathtt r}^{-1}_{ V}(\ph))\Delta^{\rm R}(y,z)\frac{\delta_l}{\delta\ph^\beta(z)}\left(\frac{\delta_r V}{\delta\ph^\alpha(x)}(\ph)\frac{\delta_l \theta_0}{\delta\ph^\ddagger_\al(x)}(\ph)\right)\\
 	-\int {\frac{\delta_r X}{\delta\ph^\gamma(y)}({\mathtt r}^{-1}_{ V}(\ph))
 		K^\alpha_{\ \beta}(z)	
 		\Delta^{\rm R}(y,z)^{\gamma\beta}\frac{\delta_r V}{\delta\ph^\alpha(x)}(\ph)}
 	+{\int\frac{\delta_r X}{\delta\ph^\alpha(x)}({\mathtt r}^{-1}_{ V}(\ph))K^\alpha_{\ \sigma}(x)\Delta^{\mathrm{R}}(x,y)^{\sigma\beta}\, \frac{\delta_lV}{\delta\ph^\beta_g(y)}(\ph)}
 	\end{multline}
 	The last two terms in \eqref{eq:clubs} combine to
 	\[
 	\int\frac{\delta_r X}{\delta\ph^\gamma(y)}({\mathtt r}^{-1}_{ V}(\ph))
 	\left(-	K^\alpha_{\ \beta}(z)	
 	\Delta^{\rm R}(y,z)^{\gamma\beta}
 	+(-1)^{|\alpha|}K^\gamma_{\ \beta}(y)\Delta^{\rm R}(y,z)^{\beta\alpha}	
 	\right)\frac{\delta_r V}{\delta\ph^\alpha(z)}(\ph)\,,
 	\]
 	which vanishes due to \eqref{const:cond}.
 	Next, using \eqref{eq:heart}, we compute the second term in \eqref{eq:star2}, namely:
 	\begin{multline}\label{eq:diamonds}
 	-\int \frac{\delta_r X}{\delta\ph^\ddagger_\alpha(x)}({\mathtt r}^{-1}_{ V}(\ph))\frac{\delta_l \theta_0}{\delta\ph^\al(x)}({\mathtt r}^{-1}_{V}(\ph))\\
 	=\int \left(-\frac{\delta_r}{\delta\ph_\alpha^\ddagger(x)}(r_V^{-1}(X))+ \frac{\delta_r X}{\delta\ph^\gamma(y)}({\mathtt r}^{-1}_{ V}(\ph))\Delta^{\rm R}(y,z)\frac{\delta_l}{\delta\ph^\beta_g(z)}\frac{\delta_r V}{\delta\ph^\ddagger_\alpha(x)}(\ph)\right)\frac{\delta_l \theta_0}{\delta\ph^\al(x)}(\ph)\\
 	=-\int \frac{\delta_r}{\delta\ph_\alpha^\ddagger(x)}(r_V^{-1}(X))\frac{\delta_l \theta_0}{\delta\ph^\al(x)}(\ph)
 	+\int \frac{\delta_r X}{\delta\ph^\gamma(y)}({\mathtt r}^{-1}_{ V}(\ph))\Delta^{\rm R}(y,z)\frac{\delta_l}{\delta\ph^\beta_g(z)}\left(\frac{\delta_r V}{\delta\ph^\ddagger_\alpha(x)}(\ph)\frac{\delta_l \theta_0}{\delta\ph^\al(x)}(\ph)\right)
 	\end{multline}
 	Inserting \eqref{eq:clubs} and \eqref{eq:diamonds} back into \eqref{eq:star2} we obtain finally:
 	\begin{multline*}
 	r_{V}^{-1}(\{X,\theta_0\})=\{r_{V}^{-1}(X),\theta_0\}\\
 	-\int \frac{\delta_r X}{\delta\ph^\gamma(y)}({\mathtt r}^{-1}_{ V}(\ph))\Delta^{\rm R}(y,z)\frac{\delta_l}{\delta\ph^\beta(z)}\left(\frac{\delta_r V}{\delta\ph^\alpha(x)}(\ph)\frac{\delta_l \theta_0}{\delta\ph^\ddagger_\al(x)}(\ph)-\frac{\delta_r V}{\delta\ph^\ddagger_\alpha(x)}(\ph)\frac{\delta_l \theta_0}{\delta\ph^\al(x)}(\ph)\right)
 	\end{multline*}
 \end{proof}
 Now we are ready to prove our main result.
 \begin{proof}(of Theorem \ref{MollerBV})
 	Since $S_0=\S_{00}+\theta_0$, the lemmas imply that
 	\[
 	r_{V}^{-1}(\{X,S_{0}\})=\{r_{V}^{-1}X,S_{0}+V\}\\- \int \frac{\delta_r X}{\delta\ph^\gamma(y)}({\mathtt r}^{-1}_{ V}(\ph))\Delta^{\rm R}(y,z)^{\gamma\beta}\frac{\delta_l}{\delta\ph^\beta(z)}\big(\{S_{0},V\}+\tfrac{1}{2}\{V,V\}\big)
 	\]
 	The last term in the brackets can be rewritten as
 	$	\frac{1}{2}(\{S,S\}-\{S_0,S_0\})$, 	so is a difference between the {\cme} for the full action and the {\cme} of the linearized theory.
 \end{proof}
 {
\bibliographystyle{amsalpha}
\bibliography{References}
}
\end{document}